\theoremstyle{plain}
\theoremstyle{plain}
\theoremstyle{plain}
\newtheorem{lem}{\protect\lemmaname}
\theoremstyle{plain}
\newtheorem{thm}{\protect\theoremname}
\theoremstyle{plain}
\theoremstyle{definition}
\theoremstyle{definition}
\theoremstyle{definition}
\providecommand{\claimname}{Claim}
\providecommand{\lemmaname}{Lemma}
\providecommand{\propositionname}{Proposition}
\providecommand{\theoremname}{Theorem}
\providecommand{\corollaryname}{Corollary} 
\providecommand{\definitionname}{Definition}
\providecommand{\assumptionname}{Assumption}
\providecommand{\remarkname}{Remark}
\newcommand{\overbar}[1]{\mkern 1.25mu\overline{\mkern-1.25mu#1\mkern-0.25mu}\mkern 0.25mu}
\newcommand{\openone}{\mathds{1}}
\newcommand{\Shat}{\widehat{S}}
\newcommand{\nMIi}{n_{\mathrm{MI,1}}}
\newcommand{\nMIii}{n_{\mathrm{MI,2}}}
\newcommand{\nConc}{n_{\mathrm{Conc}}}
\newcommand{\nIndiv}{n_{\mathrm{Indiv}}}
\newcommand{\kmax}{k_{\mathrm{max}}}
\newcommand{\ncheck}{\check{n}}
\newcommand{\ntil}{\widetilde{n}}
\newcommand{\Ntil}{\widetilde{N}}
\newcommand{\Ncheck}{\check{N}}
\newcommand{\sbar}{\overbar{s}}
\newcommand{\sdif}{s_{\mathrm{dif}}}
\newcommand{\sdifbar}{\overbar{s}_{\mathrm{dif}}}
\newcommand{\seqbar}{\overbar{s}_{\mathrm{eq}}}
\newcommand{\seq}{s_{\mathrm{eq}}}
\newcommand{\dmax}{d_{\mathrm{max}}}
\newcommand{\Bernoulli}{\mathrm{Bernoulli}}
\newcommand{\pe}{P_{\mathrm{e}}}
\newcommand{\xv}{\mathbf{x}}
\newcommand{\Xv}{\mathbf{X}}
\newcommand{\yv}{\mathbf{y}}
\newcommand{\Yv}{\mathbf{Y}}
\newcommand{\Ac}{\mathcal{A}}
\newcommand{\Kc}{\mathcal{K}}
\newcommand{\Sc}{\mathcal{S}}
\newcommand{\Yc}{\mathcal{Y}}
\newcommand{\EE}{\mathbb{E}}
\newcommand{\PP}{\mathbb{P}}
\providecommand{\algorithmname}{Algorithm}
\newcommand{\manuallabel}[2]{\def\@currentlabel{#2}\label{#1}}
\begin{document} 

\title{Noisy Adaptive Group Testing: \\ Bounds and Algorithms}
\author{Jonathan Scarlett}
\maketitle

\begin{abstract}
    The group testing problem consists of determining a small set of defective items from a larger set of items based on a number of possibly-noisy tests, and is relevant in applications such as medical testing, communication protocols, pattern matching, and many more.  One of the defining features of the group testing problem is the distinction between the non-adaptive and adaptive settings: In the non-adaptive case, all tests must be designed in advance, whereas in the adaptive case, each test can be designed based on the previous outcomes.  While tight information-theoretic limits and near-optimal practical algorithms are known for the adaptive setting in the absence of noise, surprisingly little is known in the noisy adaptive setting.  In this paper, we address this gap by providing information-theoretic achievability and converse bounds under various noise models, as well as a slightly weaker achievability bound for a computationally efficient variant.  These bounds are shown to be tight or near-tight in a broad range of scaling regimes, particularly at low noise levels.  The algorithms used for the achievability results have the notable feature of only using two or three stages of adaptivity.
\end{abstract}
\begin{IEEEkeywords}
    Group testing, sparsity, information-theoretic limits, adaptive algorithms
\end{IEEEkeywords}

\long\def\symbolfootnote[#1]#2{\begingroup\def\thefootnote{\fnsymbol{footnote}}\footnote[#1]{#2}\endgroup}

\symbolfootnote[0]{ The author is with the Department of Computer Science \& Department of Mathematics, National University of Singapore (e-mail: scarlett@comp.nus.edu.sg).  This work was supported by an NUS Startup Grant.}
\vspace*{-0.5cm}

%
%
\section{Introduction} \label{sec:intro}

The group testing problem consists of determining a small subset $S$ of ``defective'' items within a larger set of items $\{1,\dotsc,p\}$, based on a number of possibly-noisy tests. This problem has a history in medical testing \cite{Dor43}, and has regained significant attention following new applications in areas such as communication protocols \cite{Ant11}, pattern matching \cite{Cli10}, and database systems \cite{Cor05}, and connections with compressive sensing \cite{Gil08,Gil07}. In the noiseless setting, each test takes the form
\begin{equation}
    Y = \bigvee_{j \in S} X_j, \label{eq:gt_noiseless_model}
\end{equation}
where the test vector $X = (X_1,\dotsc,X_p) \in \{0,1\}^p$ indicates which items are included in the test, and $Y$ is the resulting observation.  That is, the output indicates whether at least one defective item was included in the test.   One wishes to design a sequence of tests $X^{(1)},\dotsc,X^{(n)}$, with $n$ ideally as small as possible, such that the outcomes can be used to reliably recover the defective set $S$ with probability close to one.

One of the defining features of the group testing problem is the distinction between the {\em non-adaptive} and {\em adaptive} settings.  In the non-adaptive setting, every test must be designed prior to observing any outcomes, whereas in the adaptive setting, a given test $X^{(i)}$ can be designed based on the previous outcomes $Y^{(1)},\dotsc,Y^{(i-1)}$.  It is of considerable interest to determine the extent to which this additional freedom helps in reducing the number of tests.

In the noiseless setting, a number of interesting results have been discovered along these lines:
\begin{itemize}
    \item When the number of defectives $k := |S|$ scales as $k = O(p^{1/3})$, the minimal number of tests permitting vanishing error probability scales as $n = \big(k\log_2\frac{p}{k}\big)(1+o(1))$ in both the adaptive and non-adaptive settings \cite{Bal13,Sca15b}.  Hence, at least information-theoretically, there is no asymptotic adaptivity gain.
    \item For scalings of the form $k = \Theta(p^{\theta})$ with $\theta \in \big(\frac{1}{3},1\big)$, the behavior $n = \big(k\log_2\frac{p}{k}\big)(1+o(1))$ remains unchanged in the adaptive setting \cite{Bal13}, but it remains open as to whether this can be attained non-adaptively.  For $\theta$ close to one, the best known non-adaptive achievability bounds are far from this threshold.
    \item Even in the first case above with no adaptivity gain, the adaptive algorithms known to achieve the optimal threshold are {\em practical}, having low storage and computation requirements \cite{Du93}.  In contrast, in the non-adaptive case, only computationally intractable algorithms have been shown to attain the optimal threshold \cite{Sca15b,Ald15}.
    \item It has recently been established that there is a provable adaptivity gap under certain scalings of the form $k = \Theta(p)$, i.e., the linear regime \cite{Abh18,Ald18}.
\end{itemize}
Despite this progress for the noiseless setting, there has been surprisingly little work on adaptivity in noisy settings; the vast majority of existing group testing algorithms for random noise models are non-adaptive \cite{Mal80,Cha11,Cha14,Sca17b}.  In this paper, we address this gap by providing new achievability and converse bounds for noisy adaptive group testing, focusing primarily on a widely-adopted symmetric noise model.  Before outlining our contributions, we formally introduce the setup.

\subsection{Problem Setup} \label{sec:setup}
 
Except where stated otherwise, we let the defective set $S$ be uniform on the ${p \choose k}$ subsets of $\{1,\dotsc,p\}$ of cardinality $k$.  As mentioned above, an adaptive algorithm iteratively designs a sequence of tests $X^{(1)},\dotsc,X^{(n)}$, with $X^{(i)} \in \{0,1\}^p$, and the corresponding outcomes are denoted by $\Yv = (Y^{(1)},\dotsc,Y^{(n)})$, with $Y^{(i)} \in \{0,1\}$.  A given test is allowed to depend on all of the previous outcomes.

Generalizing \eqref{eq:gt_noiseless_model}, we consider the following widely-adopted symmetric noise model:
 \begin{equation}
     Y = \bigvee_{j \in S} X_j \oplus Z, \label{eq:gt_symm_model}
 \end{equation}
 where $Z \sim \Bernoulli(\rho)$ for some $\rho \in \big(0,\frac{1}{2}\big)$, and $\oplus$ denotes modulo-2 addition.  In Section \ref{sec:asymm}, we will also consider other asymmetric noise models.
  
Given the tests and their outcomes, a \emph{decoder} forms an estimate $\Shat$ of $S$.  We consider the exact recovery criterion, in which the error probability is given by 
\begin{equation}
    \pe := \PP[\Shat \ne S], \label{eq:pe}
\end{equation}
and is taken over the randomness of the defective set $S$, the tests $X^{(1)},\dotsc,X^{(n)}$ (if randomized), and the noisy outcomes $Y^{(1)},\dotsc,Y^{(n)}$.
 
As a stepping stone towards exact recovery results, we will also consider a less stringent {\em partial recovery} criterion, in which we allow for up to $\dmax$ false positives and up to $\dmax$ false negatives, for some $\dmax > 0$.  That is, the error probability is
\begin{equation}
    \pe(\dmax) := \PP[d(S,\Shat) > \dmax], \label{eq:pe_dmax}
\end{equation}
where
\begin{equation}
    d(S,\Shat) = \max\{ |S \setminus \Shat|, |\Shat \setminus S| \}. \label{eq:dist}
\end{equation}
Understanding partial recovery is, of course, also of interest in its own right.  However, the results of \cite{Sca15b,Sca15} indicate that there is little or no adaptivity gain under this criterion, at least when $k = o(p)$ and $\dmax = \Theta(k)$.

Except where stated otherwise, we assume that the noise level $\rho$ and number of defectives $k$ are known.  In Section \ref{sec:conv}, we will consider cases where $k$ is only approximately known.
 
\subsection{Related work}

\noindent {\bf Non-adaptive setting.} The information-theoretic limits of group testing were first studied in the Russian literature \cite{Mal78,Mal80,Dya81}, and have recently become increasingly well-understood \cite{Ati12,Ald14,Sca15,Sca15b,Sca16b,Ald15}.  Among the existing works, the results most relevant to the present paper are as follows:
\begin{itemize}
    \item In the adaptive setting, it was shown by Baldassini {\em et al.} \cite{Bal13} that if the output $Y$ is produced by passing the noiseless outcome $U = \vee_{j \in S} X_j$ through a binary channel $P_{Y|U}$, then the number of tests for attaining $\pe \to 0$ must satisfy $n \ge \big(\frac{1}{C}k\log\frac{p}{k}\big)(1-o(1))$,\footnote{Here and subsequently, the function $\log(\cdot)$ has base $e$, and information measures have units of nats.} where $C$ is the Shannon capacity of $P_{Y|U}$ in nats.  For the symmetric noise model \eqref{eq:gt_symm_model}, this yields
    \begin{equation}
        n \ge \frac{k\log\frac{p}{k}}{\log 2 - H_2(\rho)} (1-o(1)), \label{eq:mi_conv}
    \end{equation}
    where $H_2(\rho) = \rho\log\frac{1}{\rho} + (1-\rho)\log\frac{1}{1-\rho}$ is the binary entropy function.
    \item In the non-adaptive setting with symmetric noise, it was shown that an information-theoretic threshold decoder attains the bound \eqref{eq:mi_conv} for $k = o(p)$ under the {partial recovery} criterion with $\dmax = \Theta(k)$ and an arbitrarily small implied constant \cite{Sca15b,Sca15}.  For exact recovery, a more complicated bound was also given in \cite{Sca15b} that matches \eqref{eq:mi_conv} when $k = \Theta(p^{\theta})$ for {\em sufficiently small} $\theta > 0$.
\end{itemize}
Several non-adaptive noisy group testing algorithms have been shown to come with rigorous guarantees.  We will use two of these non-adaptive algorithms as building blocks in our adaptive methods:
\begin{itemize}
    \item The {\em Noisy Combinatorial Orthogonal Matching Pursuit} (NCOMP) algorithm checks, for each item, the proportion of tests it was included in that returned positive, and declares the item to be defective if this number exceeds a suitably-chosen threshold.  This is known to provide optimal scaling laws for the regime $k = \Theta(p^{\theta})$ ($\theta \in (0,1)$) \cite{Cha11,Cha14}, albeit with somewhat suboptimal constants.
    \item The method of {\em separate decoding of items}, also known as {\em separate testing of inputs} \cite{Mal80,Sca17b}, also considers the items separately, but uses all of the tests.  Specifically, a given item's status is selected via a binary hypothesis test.  This method was studied for $k = O(1)$ in \cite{Mal80}, and for $k = \Theta(p^{\theta})$ in \cite{Sca17b}; in particular, it was shown that the number of tests is within a factor $\log 2$ of the optimal information-theoretic threshold under exact recovery as $\theta \to 0$, and under partial recovery (with $\dmax = \Theta(k)$) for all $\theta \in (0,1)$.
\end{itemize}
A different line of works has considered group testing with {\em adversarial} errors (e.g., see \cite{Mac97,Ngo11,Che13a}); these are less relevant to the present paper.

\noindent {\bf Adaptive setting.} As mentioned above, adaptive algorithms are well-understood in the noiseless setting \cite{Hwa72,Dam12}.  To our knowledge, the first algorithm that was proved to achieve $n = \big(k\log_2\frac{p}{k}\big)(1+o(1))$ for all $k = o(p)$ is Hwang's generalized binary splitting algorithm \cite{Hwa72,Du93}.  More recently, there has been interest in algorithms that only use limited rounds of adaptivity \cite{Mac98,Mez11,Dam12}, and among other things, it has been shown that the same guarantee can be attained using at most four stages \cite{Dam12}.  The two-stage setting is often considered to be of particular interest \cite{Deb05,Epp07,Mez11}.

In the noisy adaptive setting, the existing work is relatively limited.  In \cite{Cai13}, an adaptive algorithm called GROTESQUE was shown to provide optimal scaling laws in terms of {both samples and runtime}.  Our focus in this paper is only on the number of samples, but with a much greater emphasis on the {\em constant factors}.  In \cite[Ch.~4]{BalThesis}, noisy adaptive group testing algorithms were proposed for two different noise models based on the Z-channel and reverse Z-channel, also achieving an order-optimal required number of tests with reasonable constant factors.  We discuss these noise models further in Section \ref{sec:asymm}.

\subsection{Contributions}

In this paper, we characterize both the information-theoretic limits and the performance of practical algorithms for noisy adaptive group testing, characterizing the asymptotic required number of tests for $\pe \to 0$ as $p \to \infty$.  For the achievability part, we propose an adaptive algorithm whose first stage can be taken as any non-adaptive algorithm that comes with partial recovery guarantees, and whose second stage (and third stage in a refined version)  improve this initial estimate.  By letting the first stage use the information-theoretic threshold decoder of \cite{Sca15b}, we attain an achievability bound that is near-tight in many cases of interest, whereas by using separate decoding of items as per \cite{Mal80,Sca17b}, we attain a slightly weaker guarantee while still maintaining computational efficiency.  In addition, we provide a novel converse bound showing that $\Omega(k \log k)$ tests are always necessary, and hence, the implied constant in any scaling of the form $n = \Theta\big(k\log \frac{p}{k}\big)$ with $k = \Theta(p^{\theta})$ must grow unbounded as $\theta \to 1$.  (This is because $k \log k = \Theta\big( \frac{\theta}{1-\theta} k \log \frac{p}{k} \big)$ when $k = \Theta(p^{\theta})$.)

Our results are summarized in Figure \ref{fig:rates}, where we observe a considerable gain over the best known non-adaptive guarantees, particularly when the noise level $\rho$ is small.  Although there is a gap between the achievability and converse bounds for most values of $\theta$, the converse has the notable feature of showing that $n = \frac{k\log_2\frac{p}{k}}{\log 2 - H_2(\rho)} (1+o(1))$ is not always achievable, as one might conjecture based on the noiseless setting.  In addition, the gap between the (refined) achievability bound and the converse bound is zero or nearly zero in at least two cases: (i) $\theta$ is small; (ii) $\theta$ is close to one and $\rho$ is close to zero.   The algorithms used in our upper bounds have the notable feature of only using two or three rounds of adaptivity, i.e., two in the simple version, and three in the refined version.

In addition to these contributions for the symmetric noise model, we provide the following results for various other observation models:
\begin{itemize}
	\item In the noiseless case, we recover the threshold $n = \big(k \log_2\frac{p}{k}\big) (1+o(1))$ for all $\theta \in (0,1)$ using a two-stage adaptive algorithm.  Previously, the best known number of stages was four \cite{Dam12}.
	\item For the Z-channel noise model (defined formally in Section \ref{sec:asymm}), we show that one can attain $n = \frac{1}{C} \big(k \log_2\frac{p}{k}\big) (1+o(1))$ for all $\theta \in (0,1)$, where $C$ is the Shannon capacity of the channel.  This matches the general converse bound given in \cite{Bal13}, i.e., the generalized version of \eqref{eq:mi_conv}.  As a result, we improve on the above-mentioned bounds of \cite{BalThesis}, which contain reasonable yet strictly suboptimal constant factors.
	\item For the reverse Z-channel noise model (defined formally in Section \ref{sec:asymm}), we prove a similar converse bound to the one mentioned above for the symmetric noise model, thus showing that one cannot match the converse bound of \cite{Bal13} for all $\theta \in (0,1)$.  
\end{itemize}


The remainder of the paper is organized as follows.  For the symmetric noise model, we present the simple version of our achievability bound in Section \ref{sec:ach}, the refined version in Section \ref{sec:ach_ref}, and the converse in Section \ref{sec:conv}.  The other observation models mentioned above are considered in Section \ref{sec:asymm}, and conclusions are drawn in Section \ref{sec:conc}.
 
\begin{figure}
     \begin{centering}
         \includegraphics[width=0.48\columnwidth]{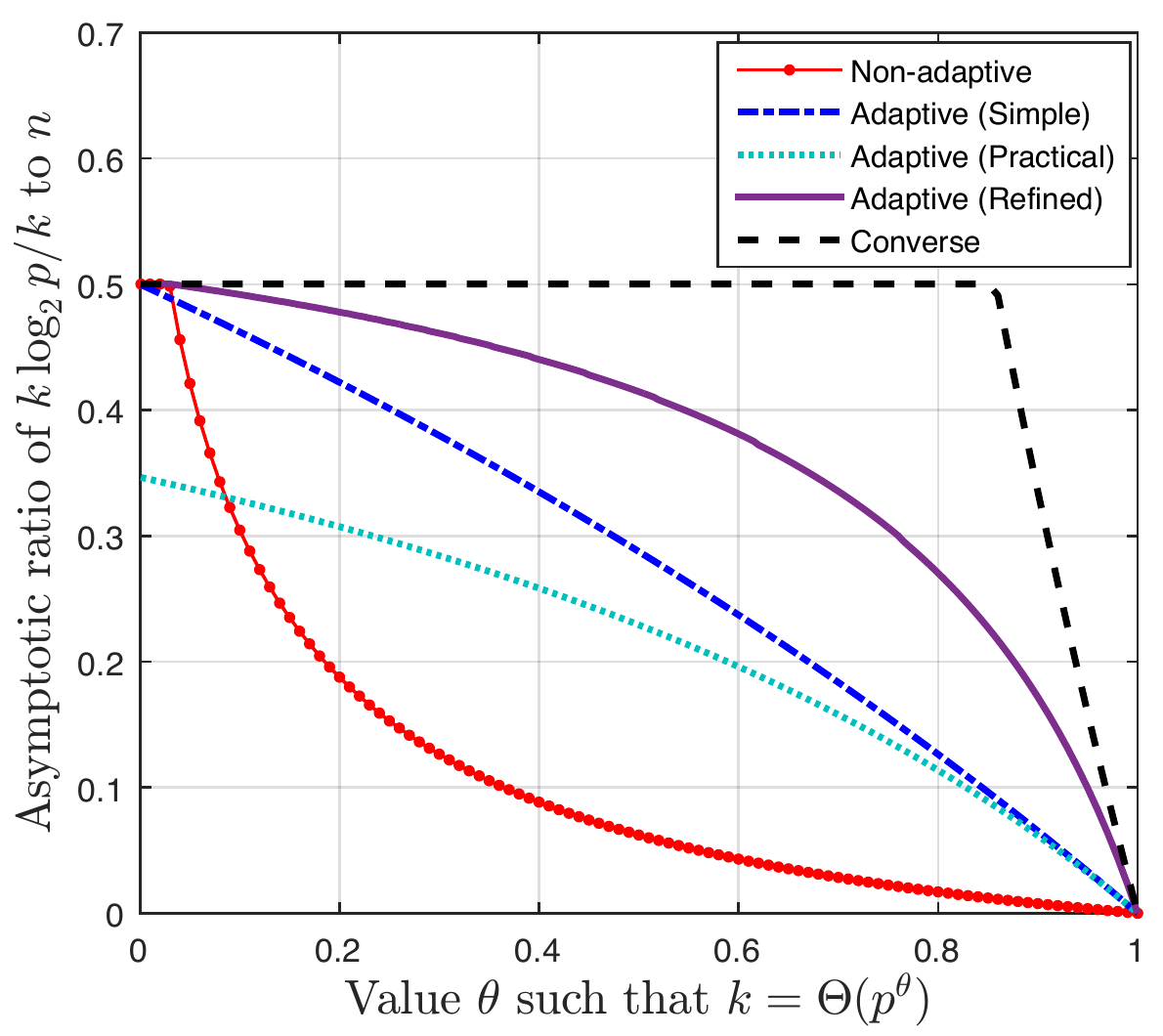} \quad
         \includegraphics[width=0.47\columnwidth]{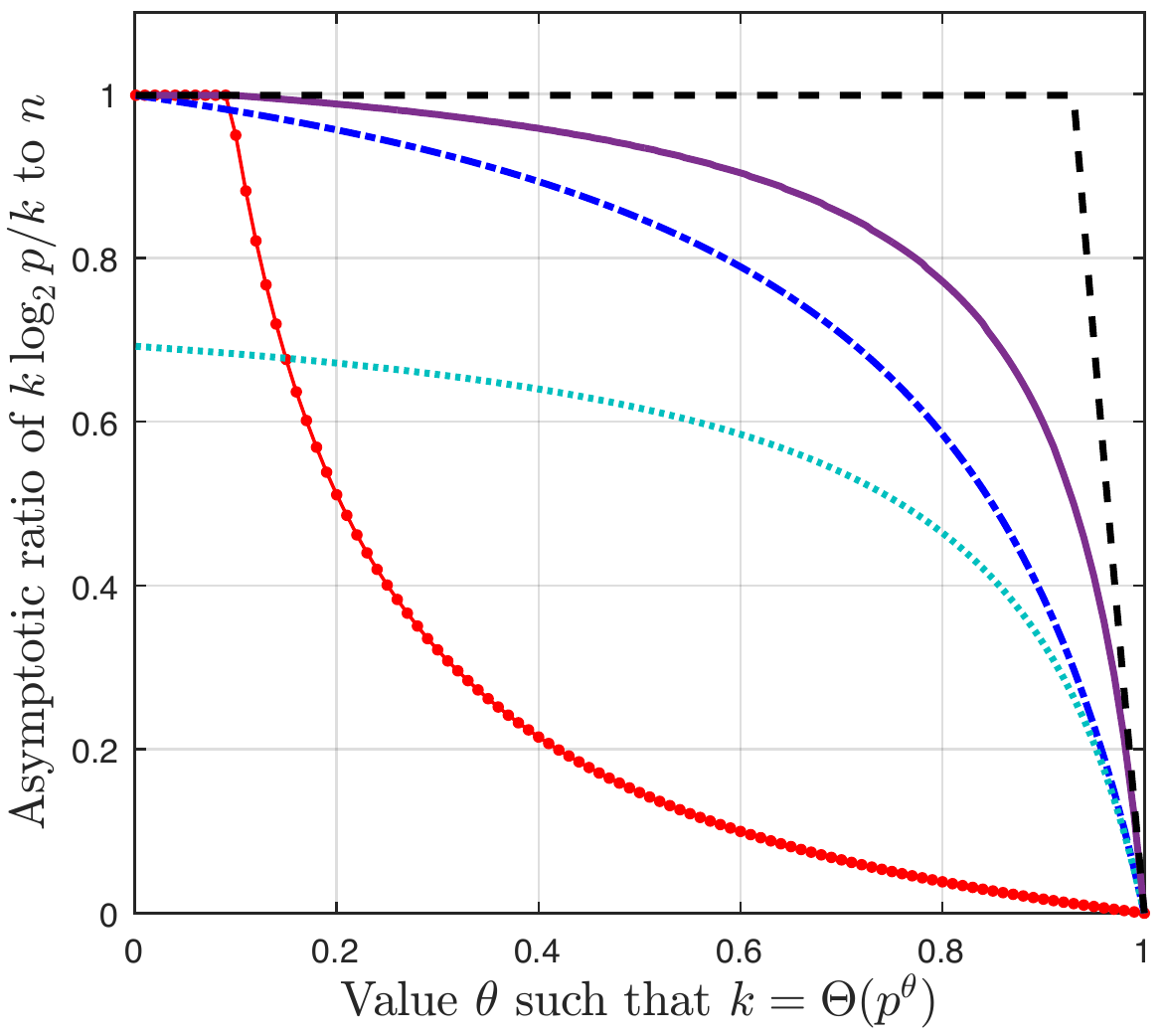}
         \par
     \end{centering}
     
     \caption{ Asymptotic bounds on the number of tests required for vanishing error probability under the noise levels $\rho = 0.11$ (Left) and $\rho = 10^{-4}$ (Right). \label{fig:rates} }
\end{figure}
 
\section{Achievability (Simple Version)} \label{sec:ach}

In this section, we formally state our simplest achievability results; a more complicated but also more powerful result is given in Section \ref{sec:ach_ref}.  Using a common two-stage approach, we provide achievability bounds for both a computationally intractable information-theoretic decoder, and computationally efficient decoder.

\subsection{Information-theoretic decoder}

The two-stage algorithm that we adopt is outlined {\em informally} in Algorithm \ref{alg:steps}; we describe the steps more precisely in the proof of Theorem \ref{thm:ach} below.    The idea is to use a non-adaptive algorithm with {partial recovery} guarantees, and then refine the solution by resolving the false negatives and false positives separately, i.e., Steps 2a and 2b.  While these latter steps are stated separately in Algorithm \ref{alg:steps}, the tests that they use can be performed together in a single round of adaptivity, so that the overall algorithm is a two-stage procedure.

\begin{algorithm}[H]
	\caption*{ \manuallabel{pr:procedure}{1} \textbf{Algorithm 1:} Two-stage algorithm for noisy group testing (informal). \label{alg:steps} }
	
	\begin{enumerate}
		\item Apply the information-theoretic threshold decoder of \cite{Sca15b} (see Appendix \ref{app:partial_gamma}) to the ground set $\{1,\dotsc,p\}$ to find an estimate $\Shat_1$ of cardinality $k$ such that 
		\begin{equation}
			\max\{|\Shat_1 \backslash S|, |S \backslash \Shat_1|\} \le \alpha_1 k
		\end{equation}
		with high probability, for some small $\alpha_1 > 0$.
		\item[2a.] Apply a variation of NCOMP \cite{Cha11} (see Appendix \ref{app:ncomp}) to the reduced ground set $\{1,\dotsc,p\} \backslash \Shat_1$ to exactly identify the false negatives $S \setminus \Shat_1$ from the first step.  Let these items be denoted by $\Shat'_{2a}$.
		\item[2b.] Test the items in $\Shat_1$ individually $\ntil$ times (for some $\ntil$ to be specified), and let $\Shat'_{2b}$ contain the items that returned positive at least $\frac{\ntil}{2}$ times.  The final estimate of the defective set is given by $\Shat := \Shat'_{2a} \cup \Shat'_{2b}$.
	\end{enumerate}
\end{algorithm}

\noindent Our first main information-theoretic achievability result is as follows.

\begin{thm} \label{thm:ach}
    Under the symmetric noisy group testing setup with crossover probability $\rho \in \big(0,\frac{1}{2}\big)$, with $k = \Theta(p^{\theta})$ for some $\theta \in (0,1)$, there exists a two-stage adaptive group testing algorithm such that
    \begin{equation}
        n \le \bigg( \frac{k\log\frac{p}{k}}{\log 2 - H_2(\rho)} + \frac{k\log k}{\frac{1}{2} \log \frac{1}{4\rho(1-\rho)}} \bigg) (1+o(1)) \label{eq:ach}
    \end{equation}
    and such that $\pe \to 0$ as $p \to \infty$.
\end{thm}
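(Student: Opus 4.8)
The plan is to analyze the three steps of Algorithm~\ref{pr:procedure} separately and then combine their test counts and failure probabilities by a union bound. First I would invoke the partial-recovery guarantee of the information-theoretic threshold decoder from \cite{Sca15b} (recalled in Appendix~\ref{app:partial_gamma}): for any fixed $\alpha_1 > 0$, Step~1 produces $\Shat_1$ with $|\Shat_1| = k$ and $\max\{|\Shat_1 \backslash S|,|S \backslash \Shat_1|\} \le \alpha_1 k$ with probability $1-o(1)$, using $\frac{k\log\frac{p}{k}}{\log 2 - H_2(\rho)}(1+o(1))$ tests. Since $|\Shat_1| = |S| = k$, the number of false positives in $\Shat_1$ equals the number of false negatives, and both are at most $\alpha_1 k$ on this good event. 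This yields the first term of \eqref{eq:ach} and reduces the remaining task to (2a) recovering the at-most-$\alpha_1 k$ false negatives lying in $\{1,\dotsc,p\} \backslash \Shat_1$, and (2b) purging the at-most-$\alpha_1 k$ false positives from $\Shat_1$.

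The second term comes from Step~2b. Testing a single item $j \in \Shat_1$ in isolation yields, on each of the $\ntil$ repetitions, an output that is $\Bernoulli(1-\rho)$ if $j \in S$ and $\Bernoulli(\rho)$ if $j \notin S$. Since $\rho < \frac12$, the likelihood ratio is monotone in the number of positive outcomes and the majority-vote rule (threshold $\frac{\ntil}{2}$) is the optimal test; its error probability is controlled by the Bhattacharyya bound $\big(2\sqrt{\rho(1-\rho)}\big)^{\ntil} = \exp\!\big(-\ntil \cdot \tfrac12 \log\frac{1}{4\rho(1-\rho)}\big)$. Taking a union bound over the $k$ items in $\Shat_1$, it suffices to set $\ntil = \frac{\log k}{\frac12 \log\frac{1}{4\rho(1-\rho)}}(1+o(1))$ (with a slowly-growing additive correction to the numerator) so that $\Shat'_{2b} = S \cap \Shat_1$ with probability $1-o(1)$; the resulting $k\ntil$ tests give precisely the second term of \eqref{eq:ach}.

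For Step~2a I would apply the NCOMP exact-recovery guarantee (Appendix~\ref{app:ncomp}) to the reduced ground set of size $p - k$, which contains at most $\alpha_1 k$ defectives. NCOMP recovers these using $O(\alpha_1 k \log p)$ tests, and since $\log\frac{p}{k} = \Theta(\log p)$ in the regime $k = \Theta(p^\theta)$, this is at most a constant multiple of $\alpha_1$ times the first term of \eqref{eq:ach}. Finally, Steps~2a and~2b depend on the first-stage outcomes only through $\Shat_1$, so they can be interleaved into a single adaptive round, giving a genuine two-stage procedure; on the intersection of the three good events (which holds with probability $1-o(1)$) we obtain $\Shat = \Shat'_{2a} \cup \Shat'_{2b} = (S \backslash \Shat_1) \cup (S \cap \Shat_1) = S$.

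I expect the main obstacle to be the bookkeeping around $\alpha_1$: the Step~2a count $O(\alpha_1 k \log p)$ is of the \emph{same order} as the leading term, so it is not automatically negligible. To absorb it into the $(1+o(1))$ factor I would prove the bound with a $\big(1 + c\,\alpha_1 + o(1)\big)$ prefactor for each fixed $\alpha_1$, and then let $\alpha_1 \to 0$ sufficiently slowly as $p \to \infty$ via a standard diagonalization, while checking that both the NCOMP constant and the Step~1 partial-recovery rate remain well-behaved as $\alpha_1 \to 0$. The per-step error analyses (the Chernoff/Bhattacharyya bound and the union bounds) are routine by comparison.
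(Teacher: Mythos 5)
Your proposal follows the paper's proof almost step for step: the same three-part decomposition (threshold decoder for the first term, NCOMP on the complement for the false negatives, individual majority-vote testing for the false positives), the same union-bound bookkeeping, and the same final bound. Your use of the Bhattacharyya bound in Step 2b is only cosmetically different from the paper's Chernoff argument, since $D_2\big(\tfrac{1}{2} \,\|\, \rho\big) = \tfrac{1}{2}\log\tfrac{1}{4\rho(1-\rho)}$ is exactly the Bhattacharyya exponent; both give the same choice of $\ntil$ and the same second term. Your explicit treatment of the $O(\alpha_1 k \log p)$ absorption --- proving a $(1 + c\,\alpha_1 + o(1))$ prefactor for fixed $\alpha_1$ and then letting $\alpha_1 \to 0$ slowly via diagonalization --- is actually spelled out more carefully than in the paper, which leaves this implicit in the phrase ``arbitrarily small constant $\alpha_1$.''

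There is one concrete point you gloss over that the paper flags and handles explicitly: the NCOMP guarantee of Appendix~\ref{app:ncomp} is proved under the assumption that the number of defectives is known up to a \emph{constant factor} (i.e., lies in $[c_0 \kmax, \kmax]$ for constant $c_0 > 0$), but in Step 2a the number of defectives remaining in $\{1,\dotsc,p\}\setminus\Shat_1$ is only known to lie in $[0, \alpha_1 k]$ --- it could be zero, one, or anything up to $\alpha_1 k$, so the hypothesis of the cited result does not hold. The paper fixes this with a purely analytical trick: it augments the reduced ground set with $\alpha_1 k$ ``dummy'' defective items, so that the defective count lies in $[\alpha_1 k, 2\alpha_1 k]$ and is known up to a factor of two, after which the appendix result applies verbatim. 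Your invocation of the appendix as stated is therefore not quite licensed; you would need either this dummy-item device or a direct verification that the NCOMP threshold test (with design parameter $\nu/\kmax$ and threshold $1-\rho-\Delta$) still separates defectives from non-defectives when the true count is far below $\kmax$ (which it does, since the gap $(1-2\rho)e^{-c\nu}$ only grows as the defective fraction $c$ shrinks, but this is an argument you must make rather than cite).
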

\begin{proof}
    We study the guarantees of the three steps in Algorithm \ref{alg:steps}, and the number of tests used for each one.
    
    {\bf Step 1.} It was shown in \cite{Sca15b} that, for an arbitrarily small constant $\alpha_1 > 0$, there exists a non-adaptive group testing algorithm returning some set $\Shat_1$ of cardinality $k$ such that 
    \begin{equation}
        \max\{|\Shat_1 \backslash S|, |S \backslash \Shat_1|\} \le \alpha_1 k, \label{eq:s1_success}
    \end{equation}
    with probability approaching one, with the number of tests being at most
    \begin{equation}
        n_1 \le \bigg( \frac{k\log\frac{p}{k}}{\log 2 - H_2(\rho)} \bigg) (1+o(1)).
    \end{equation}
    In Appendix \ref{app:partial_gamma}, we recap the decoding algorithm and its analysis.  The non-adaptive test design for this stage is the ubiquitous i.i.d.~Bernoulli design.
    
    {\bf Step 2a.} Let us condition on the first step being successful, in the sense that \eqref{eq:s1_success} holds.  We claim that there exists a non-adaptive algorithm that, when applied to the reduced ground set $\{1,\dotsc,p\} \backslash \Shat_1$, returns $\Shat'_{2a}$ containing precisely the set of (at most $\alpha_1 k$) defective items $S \setminus \Shat_1$ with probability approaching one, with the number of samples behaving as
    \begin{equation}
        n_2 = O\bigg( \alpha_1 k \log \frac{p}{\alpha_1 k} \bigg).
    \end{equation}
    If the number of defectives $k_1 := |S \backslash \Shat_1|$ in the reduced ground set were known, this would simply be an application of the $O(k_1 \log p)$ scaling derived in \cite{Cha11} for the NCOMP algorithm.  In Appendix \ref{app:ncomp}, we adapt the algorithm and analysis of \cite{Cha11} to handle the case that $k_1$ is only known up to a constant factor.
    
    In fact, in the present setting, we only know that $k_1 \in [0,\alpha_1 k]$, so we do not even know $k_1$ up to a constant factor.  To get around this, we apply a simple trick that is done purely for the purpose of the analysis:  Instead of applying the modified NCOMP algorithm directly to $\{1,\dotsc,p\} \backslash \Shat_1$, apply it to the slightly larger set in which $\alpha_1 k$ ``dummy'' defective items are included.  Then, the number of defectives is in $[\alpha_1 k,2\alpha_1 k]$, and is known up to a factor of two.  We do not expect that this trick would ever be useful practice, but it is convenient for the sake of the analysis.
    
    {\bf Step 2b.} Since we conditioned on the first step being successful, at most $\alpha_1 k$ of the $k$ items in $\Shat_1$ are non-defective.  In the final step, we simply test each item in $\Shat_1$ individually $\ntil$ times, and declare the item positive if and only if at least half of the outcomes are positive.
    
    To study the success probability, we use a well-known Chernoff-based concentration bound for Binomial random variables: If $Z \sim \mathrm{Binomial}(N,q)$, then
    \begin{align}
        &\PP[ Z \le nq' ] \le e^{-N D_2(q' \| q)}, \quad q' < q, \label{eq:bino_conc1}
    \end{align}
    where $D_2(q' \| q) = q'\log\frac{q'}{q} + (1-q')\log\frac{1-q'}{1-q}$ is the binary KL divergence function.
    
    Fix an arbitrary item $j$, and let $\Ntil_{j,1}$ be the number of its $\ntil$ tests that are positive.  Since the test outcomes are distributed as $\mathrm{Bernoulli}(1-\rho)$ for defective $j$ and $\mathrm{Bernoulli}(\rho)$ for non-defective $j$, we obtain from \eqref{eq:bino_conc1} that
    \begin{align}
        &\PP\bigg[ \Ntil_{j,1} \le \frac{\ntil}{2} \bigg] \le e^{-\ntil D_2(\frac{1}{2} \| 1-\rho)} = e^{-\ntil D_2(\frac{1}{2} \| \rho)}, \quad j \in S \label{eq:indiv_test_1} \\
        &\PP\bigg[ \Ntil_{j,1} \ge \frac{\ntil}{2} \bigg] \le e^{-\ntil D_2(\frac{1}{2} \| \rho)}, \phantom{= e^{-\ntil D_2(\frac{1}{2} \| 1-\rho)}} \hspace*{2.85ex} j \notin S. \label{eq:indiv_test_2} 
    \end{align}
    Hence, we obtain from the union bound over the $k$ items in $\Shat_1$ that
    \begin{equation}
       \PP[\Shat'_{2b} \ne (S \cap \Shat_1)] \le k \cdot e^{-\ntil D_2(\frac{1}{2} \| \rho)}.
    \end{equation}
    For any $\eta > 0$, the right-hand side tends to zero as $p \to \infty$ under the choice
    \begin{equation}
        \ntil = \frac{\log k}{D_2(\frac{1}{2} \| \rho)}(1+\eta),
    \end{equation}
    which gives a total number of tests in Step 2b of
    \begin{equation}
        n_{2b} = \frac{k\log k}{D_2(\frac{1}{2} \| \rho)}(1+\eta).
    \end{equation}
    The proof is concluded by noting that $\eta$ can be arbitrarily small, and writing $D_2(\frac{1}{2} \| \rho) = \frac{1}{2}\log \frac{1}{4\rho(1-\rho)}$.
\end{proof}

A weakness of Theorem \ref{thm:ach} is that it does not achieve the threshold $n = \frac{k \log\frac{p}{k}}{\log 2 - H_2(\rho)} (1+o(1))$ for any value of $\theta > 0$ (see Figure \ref{fig:rates}), even though such a threshold is achievable for sufficiently small $\theta$ even non-adaptively \cite{Sca15b}.  We overcome this limitation via a refined three-stage algorithm in Section \ref{sec:ach_ref}.

\subsection{Practical decoder}

Of the three steps given in Algorithm \ref{alg:steps} and the proof of Theorem \ref{thm:ach}, the only one that is computationally demanding is the first, which uses an information-theoretic threshold decoder to identify $S$ up to a distance ({\em cf.}, \eqref{eq:dist}) of $d(S,\Shat) \le \alpha_1 k$, for small $\alpha_1 > 0$.  A similar approximate recovery result was also shown in \cite{Sca17b} for separate decoding of items, which is computationally efficient.  The asymptotic threshold on $n$ for separate decoding of items is only a $\log 2$ factor worse than the optimal information-theoretic threshold \cite{Sca17b}, and this fact leads to the following counterpart to Theorem \ref{thm:ach}.

\begin{thm} \label{thm:ach_sdi}
    Under the symmetric noisy group testing setup with crossover probability $\rho \in \big(0,\frac{1}{2}\big)$, and $k = \Theta(p^{\theta})$ for some $\theta \in (0,1)$, there exists a computationally efficient two-stage adaptive group testing algorithm such that
    \begin{equation}
    n \le \bigg( \frac{k\log\frac{p}{k}}{\log 2 \cdot  (\log 2 - H_2(\rho))} + \frac{k\log k}{\frac{1}{2} \log \frac{1}{4\rho(1-\rho)}} \bigg) (1+o(1))
    \end{equation}
    and $\pe \to 0$ as $p \to \infty$.
\end{thm}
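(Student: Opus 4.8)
The plan is to follow the identical three-step structure used in the proof of Theorem \ref{thm:ach}, changing only the first step; in particular, the round structure is preserved, so the resulting algorithm is still two-stage. The key observation is that Steps 2a and 2b use no property of the first-stage estimate $\Shat_1$ beyond the partial-recovery guarantee $\max\{|\Shat_1 \backslash S|, |S \backslash \Shat_1|\} \le \alpha_1 k$, and both are already computationally efficient (a variant of NCOMP on the reduced ground set, followed by individual testing). Hence these steps carry over verbatim, contributing $O\big(\alpha_1 k \log \frac{p}{\alpha_1 k}\big)$ and $\frac{k\log k}{D_2(\frac{1}{2}\|\rho)}(1+\eta)$ tests respectively, the latter giving the second term in the bound after writing $D_2(\frac{1}{2}\|\rho) = \frac{1}{2}\log\frac{1}{4\rho(1-\rho)}$ and noting that $\eta > 0$ can be taken arbitrarily small.

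First I would replace the information-theoretic threshold decoder of Step 1 with separate decoding of items, as analyzed in \cite{Sca17b}. The guarantee I would extract is that, for an arbitrarily small constant $\alpha_1 > 0$, separate decoding returns $\Shat_1$ of cardinality $k$ satisfying $d(S,\Shat_1) \le \alpha_1 k$ with probability approaching one, using
\[
    n_1 \le \bigg( \frac{k\log\frac{p}{k}}{\log 2 \cdot (\log 2 - H_2(\rho))} \bigg)(1+o(1))
\]
tests under the i.i.d.\ Bernoulli design. This rate is exactly a factor of $\log 2$ larger than the optimal partial-recovery threshold used in Step 1 of Theorem \ref{thm:ach}; substituting it for the first term and retaining the second term from Steps 2a and 2b then yields the claimed bound. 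Since each item's status is decided by a per-item binary hypothesis test computable in polynomial time, the overall procedure is computationally efficient, as required.

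The main obstacle is verifying that the factor-$\log 2$ penalty for separate decoding, established in \cite{Sca17b} for partial recovery with $\dmax = \Theta(k)$, persists at the rate above when the allowed distance $\alpha_1 k$ carries an \emph{arbitrarily small} implied constant. This is the separate-decoding analog of the property invoked in Step 1 of Theorem \ref{thm:ach}. I would address it by revisiting the per-item error analysis of \cite{Sca17b}: the expected numbers of false positives and false negatives are each governed by the relevant hypothesis-testing error exponents, and one shows that at the stated rate these expectations become $o(k)$, after which a Markov-type bound controls the probability that either exceeds $\alpha_1 k$. Once this refined partial-recovery guarantee holds, the rest of the argument is identical to that of Theorem \ref{thm:ach}: a union bound over the vanishing-probability failure events of the three steps gives $\pe \to 0$ as $p \to \infty$.
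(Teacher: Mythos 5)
Your proposal matches the paper's own proof: the paper simply notes that the argument is nearly identical to that of Theorem \ref{thm:ach}, with the first-stage test count multiplied by $\frac{1}{\log 2}$ upon replacing the information-theoretic threshold decoder with separate decoding of items as per \cite{Sca17b}, and Steps 2a and 2b unchanged. Your additional care in checking that the factor-$\log 2$ partial-recovery guarantee of \cite{Sca17b} holds for arbitrarily small $\alpha_1$ is a reasonable elaboration of details the paper omits for brevity, not a deviation in approach.
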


\noindent The proof is nearly identical to that of Theorem \ref{thm:ach}, except that the required number of tests in the first stage is multiplied by $\frac{1}{\log 2}$ in accordance with \cite{Sca17b}.  For brevity, we omit the details.

\section{Achievability (Refined Version)} \label{sec:ach_ref}

As mentioned previously, a weakness of Theorem \ref{thm:ach} is that it only achieves the behavior $n \le \big( \frac{k \log\frac{p}{k}}{\log 2 - H_2(\rho)} \big)(1+o(1))$ (for which a matching converse is known \cite{Bal13}) in the limit as $\theta \to 0$, even though this can be achieved even non-adaptively for sufficiently small $\theta$ \cite{Sca15b}.  Since adaptivity provides extra freedom in the design, we should expect the corresponding bounds to be at least as good as the non-adaptive setting.

While we can simply take the better of Theorem \ref{thm:ach} and the exact recovery result of \cite{Sca15b}, this is a rather unsatisfying solution, and it leads to a discontinuity in the asymptotic threshold ({\em cf.}, Figure \ref{fig:rates}). It is clearly more desirable to construct an adaptive scheme that ``smoothly'' transitions between the two.  In this section, we attain such an improvement by modifying Algorithm \ref{alg:steps} in two ways.  The resulting algorithm is outlined informally in Algorithm \ref{alg:steps_ref}, and the modifications are as follows:
\begin{itemize}
    \item In the first stage, instead of learning $S$ up to a distance of $\alpha_1 k$ for some constant $\alpha_1 \in (0,1)$, we  learn it up to a distance of $k^{\gamma}$ for some $\gamma \in (0,1)$.  The non-adaptive partial recovery analysis of \cite{Sca15b} requires non-trivial modifications for this purpose; we provide the details in Appendix \ref{app:partial_gamma}.
    \item We split Step 2b of Algorithm \ref{alg:steps} into two stages, one comprising Step 2b in Algorithm \ref{alg:steps_ref}, and the other comprising Step 3.  The former of these identifies most of the defective items, and the latter resolves the rest.
\end{itemize}
It is worth noting that, at least using our analysis techniques, neither of the above modifications alone is enough to obtain a bound that is always at least as good as the non-adaptive exact recovery result of \cite{Sca15b}.  We will shortly see, however, that the two modifications combined do suffice.

\begin{algorithm}[H]
    \caption*{ \manuallabel{pr:procedure}{2} \textbf{Algorithm 2:} Three-stage algorithm for noisy group testing (informal). \label{alg:steps_ref} }
    
    \begin{enumerate}
        \item[1.] Apply the information-theoretic threshold decoder of \cite{Sca15b} (see Appendix \ref{app:partial_gamma}) to the ground set $\{1,\dotsc,p\}$ to find an estimate $\Shat_1$ of cardinality $k$ such that 
        \begin{equation}
        \max\{|\Shat_1 \backslash S|, |S \backslash \Shat_1|\} \le k^{\gamma}
        \end{equation}
        with high probability, where $\gamma \in (0,1)$.
        \item[2a.] Apply a variation of NCOMP \cite{Cha11} (see Appendix \ref{app:ncomp}) to the reduced ground set $\{1,\dotsc,p\} \backslash \Shat_1$ to exactly identify the false negatives from the first step.  Let these items be denoted by $\Shat'_{2a}$.
        \item[2b.]  Test each item in $\Shat_1$ individually $\ncheck$ times (for some $\ncheck$ to be specified), and let $\Shat'_{2b} \subseteq \Shat_1$ contain the $k - \alpha_2 k$ items that returned positive the highest number of times, for some small $\alpha_2> 0$.
        \item[3.]  Test the items in $\Shat_1 \setminus \Shat'_{2b}$ (of which there are $\alpha_2 k$) individually $\ntil$ times (for some $\ntil$ to be specified), and let $\Shat'_3$ contain the items that returned positive at least $\frac{\ntil}{2}$ times.  The final estimate of the defective set is given by $\Shat := \Shat'_{2a} \cup \Shat'_{2b} \cup \Shat'_3$.
    \end{enumerate}
\end{algorithm}

%

The following theorem characterizes the asymptotic number of tests required.

\begin{thm} \label{thm:ach2}
    Under the symmetric noisy group testing setup with crossover probability $\rho \in \big(0,\frac{1}{2}\big)$, under the scaling $k = \Theta(p^{\theta})$ for some $\theta \in (0,1)$, there exists a three-stage adaptive group testing algorithm such that
    \begin{equation}
    n \le \inf_{\gamma \in (0,1), \delta_2 \in (0,1)} \Big(\max\big\{ \nMIi, \nMIii(\gamma,\delta_2), \nConc(\gamma,\delta_2) \big\} + \nIndiv(\gamma)\Big) (1+o(1)) \label{eq:ach2}
    \end{equation}
    and $\pe \to 0$ as $p \to \infty$, where:
    \begin{itemize}
        \item The standard mutual information based term is
        \begin{equation}
        \nMIi = \frac{ k\log\frac{p}{k} }{ \log 2 - H_2(\rho) }. \label{eq:nMI1}
        \end{equation}
        \item An additional mutual information based term is
        \begin{equation}
        \nMIii(\gamma,\delta_2) = \frac{2}{(\log 2)(1-2\rho)\log\frac{1-\rho}{\rho}} \cdot \frac{1}{1-\delta_2} \cdot \Big( (1-\theta) k\log p + 2(1-\gamma)k\log k \Big). \label{eq:nMI2}
        \end{equation}
        \item The term associated with a concentration bound is
        \begin{equation}
        \nConc(\gamma,\delta_2) = \frac{4(1+\frac{1}{3}\delta_2(1-2\rho)) }{ (\log 2) \delta_2^2 (1-2\rho)^2 } \cdot (1-\gamma) k \log k. \label{eq:nConc}
        \end{equation}
        \item The term associated with individual testing is
        \begin{equation}
            \nIndiv(\gamma) = \frac{\gamma k \log k}{ D_2(\rho \| 1 - \rho) }. \label{eq:nIndiv}
        \end{equation}
    \end{itemize}
\end{thm}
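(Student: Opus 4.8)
The plan is to treat the four ingredients of Algorithm~\ref{alg:steps_ref} as separate events, controlling each via a sufficient test-count together with a failure probability that vanishes as $p \to \infty$, and then to combine them by a union bound $\pe \le \PP[\text{Step 1 fails}] + \PP[\text{Step 2a fails}] + \PP[\text{Step 2b fails}] + \PP[\text{Step 3 fails}]$. As in Theorem~\ref{thm:ach}, the analysis is conditional: Steps~2a--3 are studied on the event that Step~1 succeeds, so that $\Shat_1$ has at most $k^{\gamma}$ false positives and at most $k^{\gamma}$ false negatives. The bound \eqref{eq:ach2} then decomposes into the contributions of these ingredients, and the optimisation $\inf_{\gamma,\delta_2}$ is performed only at the end, once all four failure probabilities are shown to tend to zero.

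For Step~1 I would invoke the refined partial-recovery guarantee of Appendix~\ref{app:partial_gamma}: the i.i.d.\ Bernoulli design recovers $S$ to within distance $k^{\gamma}$ once the pool is large enough, the binding sufficient conditions being the coarse mutual-information requirement $\nMIi$ and the finer per-item requirement $\nMIii(\gamma,\delta_2)$, whose leading $(1-\theta)k\log p \asymp k\log\frac{p}{k}$ term locates the defectives among all $p$ items while the extra $2(1-\gamma)k\log k$ term pays for resolving them down to the finer scale $k^{\gamma}$; here the divergence $D_2(\rho\,\|\,1-\rho) = (1-2\rho)\log\frac{1-\rho}{\rho}$, the pooling factor $\frac{1}{\log 2}$ characteristic of the half-positive Bernoulli design, and the slack $(1-\delta_2)^{-1}$ appear naturally. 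Step~2a (false-negative recovery on $\{1,\dotsc,p\}\setminus\Shat_1$) is then handled by the NCOMP adaptation of Appendix~\ref{app:ncomp} together with the dummy-item device of Theorem~\ref{thm:ach}, which copes with the unknown count $k_1 \le k^{\gamma}$ of remaining defectives, and its cost is dominated by $\nMIii$. The crucial bookkeeping point is that Steps~1 and 2a run on the \emph{same} non-adaptive outcomes, and the pruning statistic of Step~2b can likewise be read off a common pool, so that this combined non-adaptive phase need only be large enough to meet the three sufficient conditions \emph{simultaneously}; its size is therefore $\max\{\nMIi,\nMIii,\nConc\}$ rather than their sum.

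For Step~2b I would form, for each of the $k$ items of $\Shat_1$, its empirical fraction of positive outcomes---defectives having mean $1-\rho$ and non-defectives mean $\rho$, a gap of $1-2\rho$---and place a threshold at a $\delta_2$-fraction of this gap. A Bernstein inequality (variance $\le \rho(1-\rho)$, producing the numerator $1+\tfrac13\delta_2(1-2\rho)$ and the $\delta_2^{-2}(1-2\rho)^{-2}$ dependence) then shows that retaining the top $k-\alpha_2 k$ items forces all at most $k^{\gamma}$ false positives into the discarded block, at cost $\nConc(\gamma,\delta_2)$, the $(1-\gamma)$ factor arising from the number of items that must be separated reliably. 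Since this slack $\delta_2$ is the same parameter appearing in $\nMIii$ through $(1-\delta_2)^{-1}$, a genuine trade-off emerges---small $\delta_2$ inflates $\nConc$ while large $\delta_2$ inflates $\nMIii$---and $\delta_2$ is optimised jointly with $\gamma$. Finally, Step~3 re-examines the discarded items individually by majority vote exactly as in Theorem~\ref{thm:ach}, with the split fraction chosen so that $\ntil = \frac{\log k}{D_2(\rho\,\|\,1-\rho)}(1+o(1))$ repetitions per item deliver the term $\nIndiv(\gamma)$ and correctly classify both the leftover defectives and the remaining false positives with probability approaching one.

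I expect the main obstacle to be twofold. First, the refined partial-recovery analysis of Step~1: pushing the guarantee of \cite{Sca15b} from distance $\Theta(k)$ down to $k^{\gamma}$ requires genuinely re-deriving the concentration of the relevant information densities at the finer scale, and this is what dictates the form of $\nMIii$. Second, and more delicate, is justifying the \emph{maximum} in \eqref{eq:ach2}: one must verify that a single shared non-adaptive design can simultaneously support the threshold decoding of Step~1, the NCOMP decoding of Step~2a (\emph{cf.}~\cite{Cha11}), and the ranking of Step~2b, so that these three costs merge into $\max\{\nMIi,\nMIii,\nConc\}$ rather than summing, while still respecting the overall three-stage structure. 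Everything else---the Bernstein computation, the majority-vote bound, and the closing $\inf_{\gamma,\delta_2}$---is routine once these two points are secured.
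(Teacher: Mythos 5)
Your overall architecture (four steps, conditioning on Step 1's success, union bound over failure events, optimizing $\gamma,\delta_2$ at the end) matches the paper, but your attribution of the four terms to the steps does not, and the two mechanisms you invoke to justify that attribution would fail.

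First, the maximum in \eqref{eq:ach2} has nothing to do with different steps sharing a common pool of tests. In the paper, all three terms $\nMIi$, $\nMIii(\gamma,\delta_2)$, $\nConc(\gamma,\delta_2)$ are \emph{simultaneous sufficient conditions on the Stage-1 test count alone}: they all come out of the Appendix \ref{app:partial_gamma} analysis of the threshold decoder ($\nMIi$ from the $\ell = \Theta(k)$ regime, $\nMIii$ from the mutual-information condition at $\ell = \Theta(k^{\gamma})$, and $\nConc$ from the Bernstein-type concentration of the \emph{information densities} at small $\ell$ --- that is where the constants $1+\tfrac13\delta_2(1-2\rho)$ and $\delta_2^2(1-2\rho)^2$ actually live, not in Step 2b). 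Your alternative justification --- that Step 2b's pruning statistic can be ``read off a common pool'' shared with Stage 1 --- is unsubstantiated and faces two concrete obstacles: (i) in the pooled $\Bernoulli(\nu/k)$ tests of Stage 1 with $\nu = \log 2$, a test containing a fixed non-defective item is positive with probability $\approx \tfrac12$, not $\rho$, so the gap you invoke ($1-2\rho$, means $1-\rho$ vs.\ $\rho$) is the one for \emph{individual} tests, not pooled ones; (ii) more fundamentally, $\Shat_1$ is a function of those same outcomes, so the empirical positive fraction of a non-defective item \emph{conditioned on its having been selected into} $\Shat_1$ is biased upward, and fresh concentration bounds do not apply. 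This selection-bias issue is precisely why the paper pays for new individual tests in Step 2b rather than recycling Stage-1 outcomes.

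Second, your accounting of $\nIndiv(\gamma)$ cannot produce that term. In the paper, $\nIndiv(\gamma)$ is the cost of Step 2b, not Step 3: each of the $k$ items of $\Shat_1$ is tested individually $\ncheck \approx \frac{\gamma\log k}{D_2(1-\rho\|\rho)}$ times with a \emph{one-sided} threshold $\zeta\ncheck$, $\zeta \to 1-\rho$. The exponent $D_2(1-\rho\|\rho) = D_2(\rho\|1-\rho)$ and the factor $\gamma$ arise because the union bound is needed only over the at most $k^{\gamma}$ non-defectives in $\Shat_1$ (giving $\log k^{\gamma} = \gamma\log k$), while missed defectives are absorbed by Markov's inequality into the $\alpha_2 k$ discarded items, requiring only $o(\log k)$ tests per item. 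Your Step 3, by contrast, is a majority vote ``exactly as in Theorem \ref{thm:ach}'' over the $\alpha_2 k$ discarded items: a threshold at $\ntil/2$ yields the exponent $D_2(\tfrac12\|\rho)$, not $D_2(\rho\|1-\rho)$, and $\alpha_2 k$ items at $\frac{\log k}{D_2(\cdot)}$ tests each costs $\frac{\alpha_2 k \log k}{D_2(\cdot)}$ --- there is no mechanism producing the factor $\gamma$ (in the paper, Step 3's cost $O(\alpha_2 k\log k)$ is deliberately made negligible by taking $\alpha_2$ small). Note also that if your plan did go through, you would have proved the strictly stronger bound $\max\{\nMIi,\nMIii,\nConc\} + O(\alpha_2 k\log k)$ with $\alpha_2$ arbitrarily small, i.e., you would have eliminated $\nIndiv$ altogether; the term $\nIndiv(\gamma)$ is exactly the price of not being able to reuse Stage-1 outcomes to prune $\Shat_1$, which is the step your argument skips over.
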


\noindent While the theorem statement is somewhat complex, it is closely related to other simpler results on group testing:
\begin{itemize}
    \item In the limit as $\gamma \to 0$, the term $\max\big\{ \nMIi, \nMIii(\delta_2), \nConc(\gamma,\delta_2) \big\}$ corresponds to the condition for exact recovery derived in \cite{Sca15b}.   Since $\nIndiv(\gamma)$ becomes negligible as $\gamma \to 0$, this means that we have the above-mentioned desired property of being at least as good as the exact recovery result.
    \item Taking $\gamma \to 1$ and $\delta_2 \to 0$ in a manner such that $\frac{1-\gamma}{\delta_2^2} \to 0$, we recover a strengthened version of Theorem \ref{thm:ach} with $D\big( \frac{1}{2} \| 1-\rho) = \frac{1}{2} \log \frac{1}{4\rho(1-\rho)}$ increased to $D\big( \rho \| 1-\rho)$.\footnote{By letting the first stage of Algorithm \ref{alg:steps_ref} use separate decoding of items \cite{Sca17b}, one can obtain a strengthened version of Theorem \ref{thm:ach_sdi} with the same improvement.  This result is omitted for the sake of brevity, as the main purpose of the refinements given in this section is to obtain a bound that is always at least as good as the non-adaptive information-theoretic bound of \cite{Sca15b}.}
\end{itemize}
The parameter $\delta_2$ controls the trade-off between the concentration behavior associated with $\nConc$ and the mutual information based term $\nMIii$.  


\subsection{Proof of Theorem \ref{thm:ach2}}

The proof follows similar steps to those of Theorem \ref{thm:ach}, considering the four steps of Algorithm \ref{alg:steps_ref} separately. 

{\bf Step 1.} We show in Appendix \ref{app:partial_gamma} that the approximate recovery result of \cite{Sca15b} can be extended as follows: There exists a non-adaptive algorithm recovering an estimate $\Shat_1$ of cardinality $k$ such that $d(S,\Shat_1) \le k^{\gamma}$ with probability approaching one, provided that the number of tests $n_1$ satisfies
\begin{equation}
    n_1 \ge \max\big\{ \nMIi, \nMIii(\gamma,\delta_2), \nConc(\gamma,\delta_2) \big\} \cdot (1+o(1))
\end{equation}
for some $\delta_2 \in (0,1)$, under the definitions in \eqref{eq:nMI1}--\eqref{eq:nConc}.  This algorithm and its corresponding estimate $\Shat_1$ constitute the first step.

{\bf Step 2a.} The algorithm and analysis for Stage 2 are identical to that of Theorem \ref{thm:ach}: We use the variation of NCOMP given in Appendix \ref{app:ncomp} to identify all defective items in $\{1,\dotsc,p\} \setminus \Shat_1$ with probability approaching one, while only using $O(k^{\gamma} \log p) = o(k \log p)$ tests.

{\bf Step 2b.} For this step, we need to show that the set $\Shat'_{2b}$ constructed in Algorithm \ref{alg:steps_ref} only contains defective items.   Recall that this set is constructed by testing each item in $\Shat_1$ individually $\ncheck$ times, and keeping items that returned positive the highest number of times.  Since $\Shat'_{2b}$ contains $|\Shat_1| - \alpha_2 k$ items, requiring all of these items to be defective is equivalent to requiring that the set of $\alpha_2 k$ items with the smallest number of positive outcomes includes the $k^{\gamma}$ (or fewer) non-defective items in $\Shat_1$.  For any $\zeta > 0$, the following two conditions suffice for this purpose:
\begin{itemize}
    \item Event $\Ac_1$:  All non-defective items in $\Shat_1$ return positive less than $\zeta \ncheck$ times;
    \item Event $\Ac_2$: At most $\alpha_2 k - k^{\gamma}$ defective items return positive less than $\zeta \ncheck$ times.
\end{itemize}
Here we assume that $k$ is sufficiently large so that $\alpha_2 k > k^{\gamma}$, which is valid since $\gamma < 1$ and $\alpha_2 > 0$ are constant.

Fix an arbitrary item $j$, and let $\Ncheck_{j,1}$ be the number of its $\ncheck$ tests that are positive.  Since the test outcomes are distributed as $\mathrm{Bernoulli}(1-\rho)$ for defective $j$ and $\mathrm{Bernoulli}(\rho)$ for non-defective $j$, we obtain from \eqref{eq:bino_conc1} that
\begin{align}
&\PP\big[ \Ncheck_{j,1} \le \zeta\ncheck \big] \le e^{-\ncheck D_2(\zeta \| 1-\rho)}, \quad j \in S \label{eq:indiv_test_1a} \\
&\PP\big[ \Ncheck_{j,1} \ge \zeta\ncheck \big] \le e^{-\ncheck D_2(\zeta \| \rho)}, \quad~~\, j \notin S. \label{eq:indiv_test_2a} 
\end{align}
Hence, we obtain from the union bound over the non-defective items in $\Shat_1$ that
\begin{equation}
\PP[\Ac_1^c] \le k^{\gamma} \cdot e^{-\ncheck D_2(\zeta \| \rho)},
\end{equation}
which is upper bounded by $\delta_3 > 0$ as long as 
\begin{equation}
    \ncheck \ge \frac{  \log \frac{k^\gamma}{ \delta_3 } }{  D_2(\zeta \| \rho) }. \label{eq:ncheck_cond}
\end{equation}
Moreover, regarding the event $\Ac_2$, the average number of defective items that return positive less than $\zeta\ncheck$ times is upper bounded by $k e^{-\ncheck D_2(\zeta \| 1-\rho)}$ (recall that $|\Shat_1| = k$), and hence, Markov's inequality gives
\begin{equation}
    \PP[\Ac_2^c] \le \frac{ k e^{-\ncheck D_2(\zeta \| 1-\rho)} }{ \alpha_2k - k^{\gamma} }.
\end{equation}
This is upper bounded by $\frac{ \frac{k}{\log k} }{ \alpha_2 k - k^{\gamma} } \to 0$ as long as $\ncheck \ge \frac{\log \log k}{ D_2(\zeta \| 1-\rho) }$.  This, in turn, behaves as $o(\log k)$ for any $\zeta < 1- \rho$.  Hence, we are left with only the condition on $\ncheck$ in \eqref{eq:ncheck_cond}, and choosing $\zeta$ arbitrarily close to $1- \rho$ means that we only need the following to hold for arbitrarily small $\eta > 0$:
\begin{equation}
    \ncheck \ge \frac{  \gamma \log k }{  D_2(1-\rho \| \rho) } (1+\eta),
\end{equation}
since $\log \frac{k^\gamma}{ \delta_3 } = (\gamma \log k)(1+o(1))$ for arbitrarily small $\delta_3$.  Multiplying by $k$ (i.e., the number of items that are tested individually $\ncheck$ times) and noting that $D(1-\rho \| \rho) = D(\rho \| 1- \rho)$, we deduce that the number of tests in this stage is asymptotically at most $\nIndiv(\gamma)$, defined in \eqref{eq:nIndiv}.

{\bf Step 3.} This step is the same as that of Step 2b in Algorithm \ref{alg:steps}, but we are now working with $\alpha_2 k$ items rather than $k$ items.  As a result, the number of tests required is $O( \alpha_2 k\log k)$, meaning that the coefficient to $k \log k$ can be made arbitrarily small by a suitable choice of $\alpha_2$.





\section{Converse} \label{sec:conv}

To our knowledge, the best-known existing converse bound for the symmetric noise model in the adaptive setting is the capacity-based bound of \cite{Bal13}, shown in \eqref{eq:mi_conv}.  On the other hand, the preceding achievability bounds contains $k \log k $ terms, meaning that the gap between the achievability and converse grows unbounded as $\theta \to 1$ under the scaling $k = \Theta(p^{\theta})$ (since $k \log k = \Theta\big( \frac{\theta}{1-\theta} k \log \frac{p}{k} \big)$).   In this section, we provide a novel converse bound revealing that $\Omega(k \log k)$ behavior is unavoidable.

There is a minor caveat to this converse result:  We have not been able to prove it in the case that $S$ is known to have cardinality exactly $k$, but rather, only in the case that it is known to have cardinality either $k$ or $k-1$.  We strongly conjecture that this distinction has no impact on the fundamental limits; we argue in Appendix \ref{app:partial_prob} that Theorem \ref{thm:ach} remains true even when $k$ is only known up to a multiplicative $1+o(1)$ term, and Theorem \ref{thm:ach2} remains true when $k$ is only known up to an additive $o(k^{\gamma})$ term.  Since we assume that $k \to \infty$, these assumptions are much milder than the assumption $|S| \in \{k-1,k\}$

To make the model definition more precise, fix $k \le 2p$, and define
\begin{equation}
    \Sc_{k,p} = \big\{ S\subseteq\{1,\dotsc,p\} \,:\, |S|=k \big\},
\end{equation}
and similarly for $\Sc_{k-1,p}$. We consider the following distribution for the random defective set:
\begin{equation}
    S \sim \mathrm{Uniform}( \Sc_{k,p} \cup \Sc_{k-1,p} ). \label{eq:S_modified}
\end{equation}
Under this slightly modified model, we have the following.

\begin{thm} \label{thm:conv}
    Consider the symmetric noisy group testing setup with crossover probability $\rho \in \big(0,\frac{1}{2}\big)$, $S$ distributed according to \eqref{eq:S_modified}, and $k \to \infty$ with $k \le \frac{p}{2}$.  For any adaptive algorithm, in order to achieve $\pe \to 0$, it is necessary that
    \begin{equation}
    n \ge \max\bigg\{ \frac{ k\log\frac{p}{k} }{ \log 2 - H_2(\rho) }, \frac{k \log k}{ \log\frac{1-\rho}{\rho} } \bigg\} (1-o(1)). \label{eq:conv}
    \end{equation}
\end{thm}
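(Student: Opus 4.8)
The bound is a maximum of two terms, and the first of these, $\frac{k\log\frac{p}{k}}{\log 2 - H_2(\rho)}$, is exactly the capacity converse of \cite{Bal13} quoted in \eqref{eq:mi_conv}. The same Fano computation applies verbatim to the modified prior \eqref{eq:S_modified}, since $H(S)\ge\log\binom{p}{k}$ and the per-test information is still bounded by the capacity $\log 2 - H_2(\rho)$; thus that term transfers for free. The plan is therefore to devote the entire argument to the new term $\frac{k\log k}{\log\frac{1-\rho}{\rho}}$, combining the two at the end via the trivial observation that failing either individual bound means failing the maximum. The whole content is thus an $\Omega(k\log k)$ lower bound, and the reason for working with $S\sim\mathrm{Uniform}(\Sc_{k,p}\cup\Sc_{k-1,p})$ is that it exposes a clean ``resolve one boundary item'' sub-problem: conditioned on a typical set $B$ of $k$ candidate locations, the posterior places $S$ either at $B$ (all $k$ items defective) or at $B\setminus\{j\}$ for a single removed item, and exact recovery forces the decoder to decide, for every $j\in B$, whether $j$ is genuinely defective or is the removed item.

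The first key step is the isolation principle for the OR-model: a test is informative about the individual status of an item $j\in B$ only if it contains $j$ and excludes every other defective, because any test containing a second defective has noiseless outcome $1$ and hence outcome $\Bernoulli(1-\rho)$ regardless of $j$. Letting $M_j$ denote the (run-dependent) number of tests that contain $j$ and no other member of $B$, each physical test contributes to at most one $M_j$, so $\sum_{j\in B} M_j\le n$ in every realization. The second key step is a maximum-log-likelihood-ratio bound: passing to the optimal (MAP) decoder loses nothing for a converse, and for MAP to prefer $B\setminus\{j\}$ over $B$ its accumulated log-likelihood ratio $\mathrm{LLR}_j$ from the $M_j$ isolating tests must exceed the prior gap $\log k$. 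Since each isolating observation contributes at most $\log\frac{1-\rho}{\rho}$ to $\mathrm{LLR}_j$, correctly declaring $j$ removed requires, deterministically, $M_j>\frac{\log k}{\log\frac{1-\rho}{\rho}}=:\tau$. This is precisely where the denominator $\log\frac{1-\rho}{\rho}$ (the per-test maximum evidence) and the factor $\log k$ (the $1/k$ prior penalty for singling out one of $k$ candidates) come from.

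Combining these ingredients yields the bound once adaptivity is controlled. Since $\sum_{j} M_j\le n$, in any run at most $n/\tau$ items can satisfy $M_j>\tau$, so if $n<(1-\epsilon)k\tau$ then at least $\epsilon k$ items fail the necessary condition, and on the removal event MAP errs whenever the true removed index $J$ is one of them. Were the test design oblivious to $J$, this would be immediate by averaging, $\PP[M_J>\tau]\le \EE[\#\{j:M_j>\tau\}]/k\le n/(k\tau)\le 1-\epsilon$, pinning $\pe$ away from $0$. The main obstacle is that an adaptive algorithm may correlate the high-investment set $\{j:M_j>\tau\}$ with $J$, choosing to over-invest in an item once it sees several negative outcomes there. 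The crux is to show this cannot help below the threshold, and I would argue it through the stopping structure of the likelihood ratio itself: at the instant the algorithm performs the $\lceil\tau\rceil$-th isolating test of any item $j$ it has seen fewer than $\tau$ isolating outcomes for $j$, so $\mathrm{LLR}_j<\log k$ and the posterior that $j$ is the removed item has not yet overtaken the ``no removal'' hypothesis; the decision to keep investing in $j$ is therefore taken while $j$ remains statistically indistinguishable, up to a bounded likelihood factor, from a genuinely defective item. Making this quantitative — via a change-of-measure/optional-stopping control on the Radon--Nikodym derivative between the $S=B$ and $S=B\setminus\{j\}$ laws truncated at the $\tau$-th isolating test, summed against the budget $\sum_j M_j\le n$ — is the delicate part, and is exactly where the two-cardinality prior and the constant $\log\frac{1-\rho}{\rho}$ are used most heavily. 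I would conclude by showing that, with probability bounded away from zero on the removal event, $J$ lands among the $\ge\epsilon k$ under-invested items, so that $M_J\le\tau$, the necessary condition fails, and $\pe\not\to0$.
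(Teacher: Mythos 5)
Your first-term reduction to \cite{Bal13} is fine, and several of your ingredients --- the isolation principle (a test is informative about $j$ only if it contains $j$ and no other defective), the per-test log-likelihood-ratio cap of $\log\frac{1-\rho}{\rho}$, and the budget count $\sum_j M_j \le n$ --- are exactly the ingredients of the paper's Lemmas \ref{lem:js} and \ref{lem:chg_msr}. But the step you yourself flag as ``the delicate part'' is the entire content of the theorem, and it is not carried out: you must show that an adaptive design cannot correlate its high-investment set $\{j : M_j > \tau\}$ with the removed item, and your sketched optional-stopping control of the algorithm under the removal law would, for instance, have to rule out multi-phase screening strategies that spend a few isolating tests on every item and then concentrate on the survivors. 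The paper never analyzes the algorithm under the removal law at all: it fixes a size-$k$ set $S$, uses Markov's inequality under $\PP_S$ to exhibit $\epsilon k$ items with $\PP_S\big[N_j \ge \frac{(1+2\epsilon)n}{k}\big] \le \psi(\epsilon) < 1$, and then transfers the fixed correct-decoding event $\Yc_S$ to the measure $\PP_{S\setminus\{j\}}$ at multiplicative cost $\big(\frac{\rho}{1-\rho}\big)^{(1+2\epsilon)n/k}$, using only the fact that each test is a deterministic function of past outcomes, so that the likelihood ratio of a fixed output sequence is controlled by $n_j(\yv)$. Adaptivity is thereby neutralized by a change of measure on fixed sequences, with no stopping-time argument needed.

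There is a second gap that is fatal to your concluding step even if the first were repaired. Under the prior \eqref{eq:S_modified} with $k \le \frac{p}{2}$, the removal event has probability $\PP[|S|=k-1] = \binom{p}{k-1}/\big(\binom{p}{k}+\binom{p}{k-1}\big) = \frac{k}{p+1}$, which vanishes whenever $k = o(p)$; correspondingly, in your reduction the posterior odds of $B\setminus\{j\}$ against $B$ are $\frac{1}{p-k+1}$, so your ``prior gap'' is $\log(p-k+1)$, not $\log k$. Hence ``MAP errs with probability bounded away from zero on the removal event'' only yields $\pe = \Omega(k/p)$, which is perfectly compatible with $\pe \to 0$ when $k = o(p)$; your route could at best establish the theorem in the regime $k = \Theta(p)$. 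The $\log k$ in the theorem does not come from prior odds at all, but from a combinatorial amplification that your single-superset reduction destroys: with the paper's size-$2k$ genie, each $(k-1)$-set $S'$ has $k+1$ parents $S' \cup \{j\}$ whose decoding regions $\Yc_{S'\cup\{j\}}$ are disjoint under $\PP_{S'}$, so correct decoding of the size-$k$ sets forces $\PP_{S'}[\mathrm{error}] \ge \epsilon k \, (1-2\delta-\psi(\epsilon)) \big(\frac{\rho}{1-\rho}\big)^{(1+2\epsilon)n/k}$; since no probability can exceed one, this forces $n \ge \frac{k\log k}{\log\frac{1-\rho}{\rho}}(1-o(1))$ with no lower bound on $\PP[|S|=k-1]$ required. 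In your reduction each $B\setminus\{j\}$ has only the single parent $B$, so this amplification is unavailable, and the prior-odds accounting cannot be patched to recover it.
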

\begin{proof}
    See Section \ref{sec:conv_proof}.
\end{proof}

The first term is precisely \eqref{eq:mi_conv}, so our novelty is in deriving the second term.  This result provides the first counter-example to the natural conjecture that the optimal number of tests is $\big(\frac{ k\log\frac{p}{k} }{ \log 2 - H_2(\rho) }\big)(1+o(1))$ whenever $k = \Theta(p^{\theta})$ with $\theta \in (0,1)$.  Indeed, the $\Omega(k \log k)$ lower bound reveals that the constant pre-factor to $k\log\frac{p}{k}$ must grow unbounded as $\theta \to 1$.

It is interesting to observe the behavior of Theorems \ref{thm:ach}, \ref{thm:ach2}, and \ref{thm:conv} in the limit as $\rho \to 0$.  As one should expect, under the scaling $k = \Theta(p^{\theta})$ for fixed $\theta \in (0,1)$, both the achievability and converse bounds (see \eqref{eq:ach} and \eqref{eq:conv}) tend towards the noiseless limit $\big(k\log_2\frac{p}{k}\big)(1+o(1))$ as $\rho \to 0$.  Moreover, the achievability and converse bounds scale similarly with respect to $\rho$, in the sense that the $k \log k$ term is scaled by $\Theta\big( \frac{1}{\log\frac{1}{\rho}} \big)$ in both cases.   

In fact, if we consider the refined achievability bound (Theorem \ref{thm:ach2}), we can make a stronger claim.  If we take $\gamma \to 1$ and $\theta \to 1$ simultaneously, then the bound in \eqref{eq:ach2} is asymptotically equivalent to $\nIndiv(1)$, since $\nMIi$ scales as $k \log \frac{p}{k} \ll k \log k$, whereas the constant factors in $\nMIii$ and $\nConc$ vanish (see \eqref{eq:nMI1}--\eqref{eq:nConc}).  Hence, we are only left with $\nIndiv(1)$ in \eqref{eq:nIndiv}, and if $\rho$ is small, then the denominator $D_2(\rho \| 1-\rho) = \rho \log \frac{\rho}{1-\rho} + (1-\rho)\log\frac{1-\rho}{\rho}$ is approximately equal to $\log\frac{1}{\rho}$.  The exact same statement is true for the denominator in \eqref{eq:conv}, and hence, the achievability and converse bounds exhibit {\em matching constant factors}.  Specifically, this statement holds when the order of the limits is first $n \to \infty$, then $\theta \to 1$, then $\rho \to 0$.  This fact explains the near-identical behavior of the achievability and converse in Figure \ref{fig:rates} for $\theta$ close to one in the low noise setting, $\rho = 10^{-4}$.

On the other hand, for fixed $\theta \in (0,1)$, the logarithmic decay of the $\Theta\big(\frac{1}{\log \frac{1}{\rho}}\big)$ factor to zero is quite slow, which explains the non-negligible deviation from the noiseless threshold (i.e., a straight line at height $1$) in Figure \ref{fig:rates}, even in the low-noise case.

Another interesting consequence of Theorem \ref{thm:conv} is that in the linear regime $k = \Theta(p)$, one requires $n = \Omega(p \log p)$ in the presence of noise.  This is in stark contrast to the noiseless setting, where individual testing trivially identifies $S$ with only $p$ tests.   

The proof of Theorem \ref{thm:conv} is inspired by that of a converse bound for the top-$m$ arm identification problem from the multi-armed bandit (MAB) literature \cite{Kal12}.  Compared with the latter, the adaptive group testing setting has a number of distinct features that are non-trivial to handle:
\begin{itemize}
    \item In group testing, one does not necessarily test one item at a time, whereas in the MAB setting of \cite{Kal12}, one pulls one arm at a time.
    \item In contrast with \cite{Kal12}, we do not consider a minimax lower bound, but rather, a Bayesian lower bound for a given distribution on $S$.  The latter is more difficult, in the sense that a Bayesian lower bound implies a minimax lower bound but not vice versa.
    \item In our setting, the status of each item is binary-valued (defective or non-defective), whereas the construction of a hard MAB problem in \cite{Kal12} consists of three distinct types of items (or ``arms'' in the MAB terminology), corresponding to high reward, medium reward, and low reward.
\end{itemize}
We now proceed with the proof.  

\subsection{Proof of Theorem \ref{thm:conv}} \label{sec:conv_proof}

We assume without loss of generality that any given test $X^{(i)}$ is deterministic given $Y^{(1)},\dotsc,Y^{(i-1)}$, and that the final estimate $\Shat$ is similarly deterministic given the test outcomes.  To see that it suffices to consider this case, we note that
\begin{equation}
    \PP[\mathrm{error}] = \EE\big[ \PP[\mathrm{error} \,|\, \Ac] \big] \ge \min_{A} \PP[\mathrm{error} \,|\, \Ac = A],
\end{equation}
where $\Ac$ denotes a randomized algorithm (i.e., combination of test design and decoder), and $A$ is a realization of $\Ac$ corresponding to a deterministic algorithm.

Suppose that after $S$ is randomly generated according to \eqref{eq:S_modified}, a genie reveals $S \cup T$ to the decoder, where $T$ is a uniformly random set of non-defective items such that $|S \cup T| = 2k$ (i.e., $T$ has cardinality $2k - |S| \in \{k,k+1\}$).  Hence, we are left with an easier group testing problem consisting of $2k$ items, $k-1$ or $k$ of which are defective.  Since the prior distribution on $S$ in \eqref{eq:S_modified} is uniform, we see that conditioned on the ground set of size $2k$, the defective set $S$ is uniform on the ${2k \choose k} + {2k \choose k-1}$ possibilities.  

Without loss of generality, assume that the $2k$ revealed items are $\{1,\dotsc,2k\}$, and hence, the new distribution of $S$ given the information from the genie is
\begin{equation}
S \sim \mathrm{Uniform}( \Sc_{k,2k} \cup \Sc_{k-1,2k} ).
\end{equation}
We first study the error probability conditioned on a given defective set $S \subset \{1,\dotsc,2k\}$ having cardinality $k$.  For any such fixed choice, we denote probabilities and expectations (with respect to the noisy outcomes) by $\PP_S$ and $\EE_S$.

Fix $\epsilon \in \big(0,\frac{1}{2}\big)$, and for each $j \in S$, let $N_j$ be the (random) number of tests containing item $j$ and no other defective items.  Since $\sum_{j \in S} N_j \le n$ with probability one, we have $\sum_{j \in S} \EE_S[N_j] \le n$, meaning that at most $(1-\epsilon)k$ of the $j \in S$ have $\EE_S[N_j] \ge \frac{n}{(1-\epsilon)k}$.  For all other $j$, we have $\EE_S[N_j] \le \frac{n}{(1-\epsilon)k}$, and Markov's inequality gives $\PP_S[N_j \ge \frac{(1+2\epsilon)n}{k}] \le \frac{1}{(1-\epsilon)(1+2\epsilon)} < 1$.  Denoting $\psi(\epsilon) := \frac{1}{(1-\epsilon)(1+2\epsilon)}$ for brevity, we have proved the following.

\begin{lem} \label{lem:js}
    For any $\epsilon \in \big(0,\frac{1}{2}\big)$, and any set $S \subset \{1,\dotsc,2k\}$ of cardinality $k$, there exist at least $\epsilon k$ items $j \in S$ such that $\PP_S[N_j \ge \frac{(1+2\epsilon)n}{k}] \le \psi(\epsilon)$, where $\psi(\epsilon) = \frac{1}{(1-\epsilon)(1+2\epsilon)}$.
\end{lem}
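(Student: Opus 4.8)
The plan is to exploit a simple budget constraint on the random counts $N_j$ together with a first-moment (Markov) argument. First I would observe that the defining property of $N_j$ --- that it counts tests containing $j$ and \emph{no other} defective item --- forces the contributions to different $N_j$ to be disjoint: any single test is an ``isolating'' test for at most one defective index, since a given test has a well-defined (possibly empty) set of defectives it contains, and it increments some $N_j$ only when that set is the singleton $\{j\}$. Consequently $\sum_{j \in S} N_j$ equals the number of tests containing exactly one defective, which never exceeds the total number of tests $n$. Thus $\sum_{j \in S} N_j \le n$ holds with probability one, and taking expectations under $\PP_S$ yields $\sum_{j \in S} \EE_S[N_j] \le n$.

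Next I would convert this aggregate bound into a statement about most individual items via a counting argument. If strictly more than $(1-\epsilon)k$ of the indices $j \in S$ had $\EE_S[N_j] \ge \frac{n}{(1-\epsilon)k}$, their contributions alone would already exceed $(1-\epsilon)k \cdot \frac{n}{(1-\epsilon)k} = n$, contradicting the budget. Hence at most $(1-\epsilon)k$ items have large expected count, leaving at least $\epsilon k$ items $j \in S$ with $\EE_S[N_j] \le \frac{n}{(1-\epsilon)k}$. For each such ``light'' item, Markov's inequality gives
\begin{equation}
    \PP_S\Big[ N_j \ge \tfrac{(1+2\epsilon)n}{k} \Big] \le \frac{\EE_S[N_j]}{(1+2\epsilon)n/k} \le \frac{n/((1-\epsilon)k)}{(1+2\epsilon)n/k} = \frac{1}{(1-\epsilon)(1+2\epsilon)} = \psi(\epsilon),
\end{equation}
which is exactly the claimed bound for these $\epsilon k$ items.

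The only remaining point is to confirm that $\psi(\epsilon)$ is strictly below one, so that the conclusion is nontrivial and usable in the subsequent converse steps. Expanding, $(1-\epsilon)(1+2\epsilon) = 1 + \epsilon(1-2\epsilon)$, which is strictly greater than one precisely on the range $\epsilon \in \big(0,\frac{1}{2}\big)$, giving $\psi(\epsilon) < 1$ as required. I do not anticipate a genuine obstacle here: the argument is an averaging-plus-Markov computation. The one subtlety worth double-checking is the disjointness/budget claim in the first step, as everything downstream rests on $\sum_{j \in S} N_j \le n$; once that is in hand the rest is routine.
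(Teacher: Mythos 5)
Your proof is correct and is essentially identical to the paper's own argument: both rest on the pointwise budget $\sum_{j \in S} N_j \le n$ (valid since each test can be an isolating test for at most one defective), the averaging/counting step showing at least $\epsilon k$ items satisfy $\EE_S[N_j] \le \frac{n}{(1-\epsilon)k}$, and a final application of Markov's inequality producing $\psi(\epsilon) = \frac{1}{(1-\epsilon)(1+2\epsilon)}$. The only difference is that you spell out the disjointness justification for the budget constraint, which the paper states without proof, so there is nothing to add.
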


The following lemma, consisting of a change of measure between the probabilities under two different defective sets, will also be crucial.  Recalling that we are considering test designs that are deterministic given the past samples, we see that $N_j$ is a deterministic function of $\Yv = (Y^{(1)},\dotsc,Y^{(n)})$, so we write the corresponding function as $n_j(\yv)$.  Moreover, we let  $\Yc_S$ be the set of $\yv$ sequences that are decoded as $S$, and we write $\PP[\yv]$ and $\PP[\Yc_S]$ as shorthands for $\PP[\Yv = \yv]$ and $\PP[\Yv \in \Yc_S]$, respectively.

\begin{lem} \label{lem:chg_msr}
    Given $S$ of cardinality $k$, for any $j \in S$, and any output sequence $\yv$ such that $n_j(\yv) \le \frac{(1+2\epsilon )n}{k}$, we have
    \begin{equation}
    \PP_{S \setminus \{j\}}[\yv] \ge \PP_S[\yv] \Big( \frac{\rho}{1-\rho} \Big)^{\frac{(1+2\epsilon)n}{k}}. \label{eq:chg_msr}
    \end{equation}
    Moreover, if $j \in S$ is such that $\PP_S[N_j \ge \frac{(1+2\epsilon)n}{k}] \le \psi(\epsilon)$, then
    \begin{equation}
    \PP_{S \setminus \{j\}}[\Yc_S] \ge \big(\PP_S[\Yc_S] - \psi(\epsilon) \big) \Big( \frac{\rho}{1-\rho} \Big)^{\frac{(1+2\epsilon)n}{k}}. \label{eq:chg_msr2}
    \end{equation}
\end{lem}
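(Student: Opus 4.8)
The plan is to prove \eqref{eq:chg_msr} by a direct likelihood-ratio computation between the two defective sets, and then to obtain \eqref{eq:chg_msr2} by summing this pointwise bound over the decoding region $\Yc_S$ after splitting off the sequences with atypically large $n_j(\yv)$. The starting point is the assumption that the test design is deterministic given the past outcomes: for a fixed output sequence $\yv=(y^{(1)},\dotsc,y^{(n)})$, the test vectors $X^{(1)},\dotsc,X^{(n)}$ are uniquely reconstructed from $\yv$ (each $X^{(i)}$ being a fixed function of $y^{(1)},\dotsc,y^{(i-1)}$), and this reconstruction is \emph{identical} under $\PP_S$ and $\PP_{S\setminus\{j\}}$, since the algorithm observes only outcomes and never the true defective set. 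Consequently both likelihoods factorize into per-test conditional probabilities measured against the respective noiseless outcomes $U_S^{(i)}=\bigvee_{l\in S}X_l^{(i)}$ and $U_{S\setminus\{j\}}^{(i)}=\bigvee_{l\in S\setminus\{j\}}X_l^{(i)}$, each factor equalling $1-\rho$ on a match with the observed $y^{(i)}$ and $\rho$ on a mismatch.

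The key structural observation is that $U_S^{(i)}$ and $U_{S\setminus\{j\}}^{(i)}$ differ on exactly those tests that contain $j$ but no other defective item, i.e., precisely the $N_j=n_j(\yv)$ tests counted in the lemma; on every other test the two noiseless outcomes coincide and the corresponding factors cancel in the ratio $\PP_{S\setminus\{j\}}[\yv]/\PP_S[\yv]$. On each differing test we have $U_S^{(i)}=1$ and $U_{S\setminus\{j\}}^{(i)}=0$, so the per-test ratio is $\frac{\rho}{1-\rho}$ when $y^{(i)}=1$ and $\frac{1-\rho}{\rho}$ when $y^{(i)}=0$. Since $\rho<\frac12$ gives $\frac{\rho}{1-\rho}<1<\frac{1-\rho}{\rho}$, each of the $n_j(\yv)$ factors is at least $\frac{\rho}{1-\rho}$, whence $\PP_{S\setminus\{j\}}[\yv]\ge\PP_S[\yv]\big(\frac{\rho}{1-\rho}\big)^{n_j(\yv)}$. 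Finally, the hypothesis $n_j(\yv)\le\frac{(1+2\epsilon)n}{k}$ together with $\frac{\rho}{1-\rho}<1$ lets me enlarge the exponent from $n_j(\yv)$ to $\frac{(1+2\epsilon)n}{k}$ (which only decreases the right-hand side), yielding \eqref{eq:chg_msr}.

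For \eqref{eq:chg_msr2}, I would partition $\Yc_S$ into the ``good'' sequences with $n_j(\yv)\le\frac{(1+2\epsilon)n}{k}$ and the rest, apply the pointwise bound \eqref{eq:chg_msr} on the good part, and discard the bad part by nonnegativity. Summing \eqref{eq:chg_msr} over the good part gives $\PP_{S\setminus\{j\}}[\Yc_S]\ge\PP_S\big[\Yc_S\cap\{n_j\le\tfrac{(1+2\epsilon)n}{k}\}\big]\big(\frac{\rho}{1-\rho}\big)^{(1+2\epsilon)n/k}$, while the bad mass obeys $\PP_S\big[\Yc_S\cap\{n_j>\tfrac{(1+2\epsilon)n}{k}\}\big]\le\PP_S\big[N_j\ge\tfrac{(1+2\epsilon)n}{k}\big]\le\psi(\epsilon)$ by the hypothesis on $j$. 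Hence the surviving probability is at least $\PP_S[\Yc_S]-\psi(\epsilon)$, establishing the claim (trivially so when this quantity is negative, since the left-hand side is nonnegative).

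I expect the main obstacle to be the bookkeeping of adaptivity in the first step: one must argue cleanly that conditioning on a fixed $\yv$ freezes the entire data-dependent test sequence \emph{identically} under both measures, so that the likelihood ratio collapses to a product over only the $N_j$ tests in which $j$ is the unique defective. Once this factorization is justified, the direction of the inequality follows purely from $\rho<\frac12$, and the remaining steps are elementary.
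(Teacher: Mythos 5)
Your proof is correct and takes essentially the same route as the paper's: the same likelihood-ratio factorization enabled by the deterministic adaptive design, the same observation that only the $n_j(\yv)$ tests in which $j$ is the unique defective contribute non-cancelling factors (each at least $\frac{\rho}{1-\rho}$ since $\rho<\tfrac12$), and the same handling of the second part by restricting to the event $\{N_j \le \frac{(1+2\epsilon)n}{k}\}$, which the paper phrases via $\PP[A \cap B] \ge \PP[A] - \PP[B^c]$ rather than your explicit good/bad partition of $\Yc_S$.
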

\begin{proof}
    Again using the fact that the test designs that are deterministic given the past samples, we can write
    \begin{align}
        \PP_S[\yv] 
            &= \prod_{i=1}^n \PP_S[y^{(i)} | y^{(1)},\dotsc,y^{(i-1)}] \\
            &= \prod_{i=1}^n \PP_S[y^{(i)} | x^{(i)}],  \label{eq:p_product}
    \end{align}
    where $x^{(i)} \in \{0,1\}^p$ is the $i$-th test.  Note that \eqref{eq:p_product} holds because $Y^{(i)}$ depends on the previous samples only through $X^{(i)}$.  An analogous expression also holds for $\PP_{S \backslash \{j\}}[\yv]$.
    
    Due to the ``or'' operation in the observation model \eqref{eq:gt_symm_model}, the only tests for which the outcome probability changes as a result of removing $j$ from $S$ are those for which $j$ was the unique defective item tested.  We have at most $\frac{(1+2\epsilon)n}{k}$ such tests by assumption, and each of them causes the probability of $y^{(i)}$ (given $x^{(i)}$) to be multiplied or divided by $\frac{\rho}{1-\rho}$.  Since $\rho < 0.5$, we deduce the lower bound in \eqref{eq:chg_msr}, corresponding to the case that all $\frac{(1+2\epsilon)n}{k}$ of them are multiplied by this factor.
    
    To prove the second part, we write
    \begin{align}
    \PP_{S \setminus \{j\}}[\Yc_S]
    &\ge \PP_{S \setminus \{j\}}\bigg[\Yv \in \Yc_S \,\cap\, N_j \le \frac{(1+2\epsilon)n}{k} \bigg] \\
    &\ge \PP_{S}\bigg[\Yv \in \Yc_S \,\cap\, N_j \le \frac{(1+2\epsilon)n}{k} \bigg] \Big( \frac{\rho}{1-\rho} \Big)^{\frac{(1+2\epsilon)n}{k}} \label{eq:err_bound_3} \\
    &\ge \big(\PP_S[\Yc_S] - \psi(\epsilon)\big) \Big( \frac{\rho}{1-\rho} \Big)^{\frac{(1+2\epsilon)n}{k}}, \label{eq:err_bound_4}
    \end{align}
    where \eqref{eq:err_bound_3} follows from the first part of the lemma, and \eqref{eq:err_bound_4} follows by writing $\PP[A \cap B] \ge \PP[A] - \PP[B^c]$.
\end{proof}

The idea behind applying this lemma is that if a given $\yv$ is decoded to $S$, then it cannot be decoded to $S \setminus \{j\}$; hence, if a given sequence $\yv$ contributes to $\PP_S[\mathrm{no~error}]$, then it also contributes to $\PP_{S \setminus \{j\}}[\mathrm{error}]$.   We formalize this idea as follows.  Recalling that $\Sc_{k,2k}$ is the set of all subsets of $\{1,\dotsc,2k\}$ of cardinality $k$, we have
\begin{align}
\sum_{S' \in \Sc_{k-1,2k}} \PP_{S'}[\mathrm{error}]
&\ge \sum_{S' \in \Sc_{k-1,2k}} \sum_{j \notin S'} \PP_{S'}[\Yc_{S' \cup \{j\}}] \label{eq:avg_chg_msr_1} \\
&= \sum_{S' \in \Sc_{k-1,2k}} \sum_{j \notin S'} \sum_{S \in \Sc_{k,2k}} \openone\big\{ S = S' \cup \{j\} \big\} \PP_{S'}[\Yc_{S}] \label{eq:avg_chg_msr_2} \\
&= \sum_{S' \in \Sc_{k-1,2k}} \sum_{j=1}^{2k} \sum_{S \in \Sc_{k,2k}} \openone\big\{ S = S' \cup \{j\} \big\} \PP_{S'}[\Yc_{S}] \label{eq:avg_chg_msr_3} \\
&= \sum_{S \in \Sc_{k,2k}} \sum_{j \in S} \sum_{S' \in \Sc_{k-1,2k}}  \openone\big\{ S = S' \cup \{j\} \big\} \PP_{S'}[\Yc_{S}] \label{eq:avg_chg_msr_4} \\        
&= \sum_{S \in \Sc_{k,2k}} \sum_{j \in S} \PP_{S \backslash \{j\}}[\Yc_{S}], \label{eq:avg_chg_msr_5}
\end{align}
where \eqref{eq:avg_chg_msr_1} follows since $S'$ differs from $S' \cup \{j\}$, \eqref{eq:avg_chg_msr_2} follows since the indicator function is only equal to one for $S = S' \cup \{j\}$, \eqref{eq:avg_chg_msr_3} follows since the extra $j$ included in the middle summation (i.e., $j \in S$) also make the indicator function equal zero, \eqref{eq:avg_chg_msr_4} follows by re-ordering the summations and noting that the indicator function equals zero when $j \notin S$, and \eqref{eq:avg_chg_msr_5} follows by only keeping the $S'$ for which the indicator function is one.

The following lemma is based on lower bounding \eqref{eq:avg_chg_msr_5} using Lemma \ref{lem:chg_msr}.

\begin{lem} \label{lem:cnv_final}
    If $\frac{1}{|\Sc_{k,2k}|} \sum_{S \in \Sc_{k,2k}} \PP_{S}[\mathrm{error}] \le \delta$ for some $\delta > 0$, then
    \begin{equation}
        \frac{1}{|\Sc_{k-1,2k}|} \sum_{S' \in \Sc_{k-1,2k}} \PP_{S'}[\mathrm{error}] \ge \epsilon k \cdot \big( 1-2\delta - \psi(\epsilon) \big) \cdot\Big( \frac{\rho}{1-\rho} \Big)^{\frac{(1+2\epsilon)n}{k}} \label{eq:cnv_final}
    \end{equation}
    for any $\epsilon \in \big(0,\frac{1}{2}\big)$.
\end{lem}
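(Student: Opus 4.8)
The plan is to assemble Lemmas~\ref{lem:js} and \ref{lem:chg_msr} on top of the identity \eqref{eq:avg_chg_msr_5}, which already rewrites the total error over the cardinality-$(k-1)$ sets as $\sum_{S\in\Sc_{k,2k}}\sum_{j\in S}\PP_{S\setminus\{j\}}[\Yc_S]$. I would lower bound this one $S\in\Sc_{k,2k}$ at a time. Lemma~\ref{lem:js} produces a collection of at least $\epsilon k$ indices $j\in S$ for which $\PP_S[N_j\ge\frac{(1+2\epsilon)n}{k}]\le\psi(\epsilon)$; since every summand $\PP_{S\setminus\{j\}}[\Yc_S]$ is a probability and hence non-negative, I may discard all indices outside this collection and still retain a valid lower bound on the inner sum.

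For each retained index $j$, the hypothesis of the second half of Lemma~\ref{lem:chg_msr} holds verbatim, so \eqref{eq:chg_msr2} gives $\PP_{S\setminus\{j\}}[\Yc_S]\ge\big(\PP_S[\Yc_S]-\psi(\epsilon)\big)\big(\tfrac{\rho}{1-\rho}\big)^{(1+2\epsilon)n/k}$. The key observation is that this per-index bound does not depend on $j$, so summing it over the $\ge\epsilon k$ retained indices yields, for every $S$,
\[
\sum_{j\in S}\PP_{S\setminus\{j\}}[\Yc_S]\;\ge\;\epsilon k\,\big(\PP_S[\Yc_S]-\psi(\epsilon)\big)\Big(\tfrac{\rho}{1-\rho}\Big)^{(1+2\epsilon)n/k}.
\]
This inequality must be argued to hold regardless of the sign of $\PP_S[\Yc_S]-\psi(\epsilon)$: when the factor is non-negative it follows from summing $\ge\epsilon k$ non-negative terms each at least the stated size, and when it is negative the right-hand side is negative while the left-hand side, a sum of probabilities, is non-negative.

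It then remains to average over $S\in\Sc_{k,2k}$. Writing $\PP_S[\Yc_S]=1-\PP_S[\mathrm{error}]$ and invoking the hypothesis $\frac{1}{|\Sc_{k,2k}|}\sum_{S\in\Sc_{k,2k}}\PP_S[\mathrm{error}]\le\delta$ gives $\sum_{S\in\Sc_{k,2k}}\big(\PP_S[\Yc_S]-\psi(\epsilon)\big)\ge|\Sc_{k,2k}|\big(1-\delta-\psi(\epsilon)\big)$. Dividing \eqref{eq:avg_chg_msr_5} through by $|\Sc_{k-1,2k}|$ and using $|\Sc_{k,2k}|/|\Sc_{k-1,2k}|=(k+1)/k\ge1$ then delivers $\frac{1}{|\Sc_{k-1,2k}|}\sum_{S'\in\Sc_{k-1,2k}}\PP_{S'}[\mathrm{error}]\ge\epsilon k\,(1-\delta-\psi(\epsilon))\big(\tfrac{\rho}{1-\rho}\big)^{(1+2\epsilon)n/k}$, which is at least the claimed bound \eqref{eq:cnv_final} (the stated $1-2\delta$ leaves a little slack). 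I expect the only genuinely delicate point to be the sign bookkeeping in the displayed per-$S$ inequality above — ensuring that the change-of-measure step, which can produce a formally negative right-hand side, is combined with the averaging in a way that never inverts an inequality — together with routine care over the normalization and the binomial ratio $|\Sc_{k,2k}|/|\Sc_{k-1,2k}|$; the substantive probabilistic content has already been packaged into Lemmas~\ref{lem:js} and \ref{lem:chg_msr}.
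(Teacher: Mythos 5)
Your proof is correct, and it shares the paper's skeleton---the identity \eqref{eq:avg_chg_msr_5}, Lemma \ref{lem:js} to select at least $\epsilon k$ good indices per set, and the second part of Lemma \ref{lem:chg_msr}---but it handles the averaging step differently, in a way that is both cleaner and slightly stronger. The paper exploits the hypothesis via pigeonhole: at least $\frac{1}{2}{2k\choose k}$ sets $S$ satisfy $\PP_S[\mathrm{error}]\le 2\delta$, and only those sets are retained, which is where the $1-2\delta$ and a factor $\frac{1}{2}$ come from; the $\frac{1}{2}$ is then cancelled against the claimed identity ${2k\choose k}=2{2k\choose k-1}$. That identity is actually wrong---the true ratio is ${2k\choose k}/{2k\choose k-1}=\frac{k+1}{k}\to 1$---so the paper's proof, read literally, only delivers the stated bound up to a factor of roughly $2$ (immaterial for Theorem \ref{thm:conv}, which uses the lemma only up to constants). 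Your route---keeping \emph{all} $S\in\Sc_{k,2k}$, writing $\PP_S[\Yc_S]=1-\PP_S[\mathrm{error}]$, and applying linearity together with the hypothesis---avoids the pigeonhole loss entirely and yields the stronger constant $1-\delta-\psi(\epsilon)$, from which \eqref{eq:cnv_final} follows since $1-\delta-\psi(\epsilon)\ge 1-2\delta-\psi(\epsilon)$; your sign analysis for the per-$S$ inequality is exactly right. The one loose end is the final step: multiplying by $|\Sc_{k,2k}|/|\Sc_{k-1,2k}|=\frac{k+1}{k}\ge 1$ preserves the direction of the inequality only when $1-\delta-\psi(\epsilon)\ge 0$. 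In the complementary case one has $1-2\delta-\psi(\epsilon)\le 1-\delta-\psi(\epsilon)<0$, so the right-hand side of \eqref{eq:cnv_final} is negative while the left-hand side is an average of probabilities, and the conclusion is trivial---adding that one sentence closes the argument completely.
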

\begin{proof}
    Since $\frac{1}{|\Sc_{k,2k}|} \sum_{S \in \Sc_{k,2k}} \PP_{S}[\mathrm{error}] \le \delta$ and $|\Sc_{k,2k}| = {2k \choose k}$, there must exist at least $\frac{1}{2}{2k \choose k}$ defective sets $S \in \Sc_{k,2k}$ such that $\PP_{S}[\mathrm{error}] \le 2\delta$.  We lower bound the first summation in \eqref{eq:avg_chg_msr_5} by a summation over such $S$,
    and for each one, we lower bound the summation over $j \in S$ by the set of size at least $\epsilon k$ given in Lemma \ref{lem:js}.  For the choices of $S$ and $j$ that are kept in this lower bound, the summand $\PP_{S \setminus \{j\}}[\Yc_S]$ is lower bounded by $\big( 1 - 2\delta - \psi(\epsilon) \big)\big( \frac{\rho}{1-\rho} \big)^{\frac{(1+2\epsilon)n}{k}}$ by the second part of Lemma \ref{lem:chg_msr} (with $\PP_S[\Yc_S] = \PP_S[\mathrm{no~error}] \ge 1-2\delta$).  Putting this all together, we obtain
    \begin{equation}
    \sum_{S' \in \Sc_{k-1,2k}} \PP_{S'}[\mathrm{error}] \ge \frac{1}{2}{2k \choose k} \cdot \epsilon k \cdot \big( 1-2\delta - \psi(\epsilon) \big) \cdot\Big( \frac{\rho}{1-\rho} \Big)^{\frac{(1+2\epsilon)n}{k}}.
    \end{equation}
    Using the identity ${2k \choose k} = {2k \choose k-1} \cdot \frac{2k - k}{k} = 2{2k \choose k-1}$, this yields
    \begin{equation}
    \frac{1}{{2k \choose k-1}}\sum_{S' \in \Sc_{k-1,2k}} \PP_{S'}[\mathrm{error}] \ge \epsilon k \cdot \big( 1-2\delta - \psi(\epsilon) \big) \cdot\Big( \frac{\rho}{1-\rho} \Big)^{\frac{(1+2\epsilon)n}{k}},
    \end{equation}
    which proves the lemma.
\end{proof}

Recalling that $\psi(\epsilon) = \frac{1}{(1-\epsilon)(1+2\epsilon)}$, it is easily verified that $\psi(\epsilon) < 1$ for all $\epsilon \in \big(0,\frac{1}{2}\big)$.  Hence, by a suitable choice of $\delta$, we can let $\epsilon$ be arbitrarily small while still ensuring that $1-2\delta - \psi(\epsilon) > 0$.  Moreover, $\PP[S \in \Sc_{k,2k}]$ and $\PP[S \in \Sc_{k-1,2k}]$ are both bounded away from zero under the distribution in \eqref{eq:S_modified}.  Most importantly, the term $k\big( \frac{\rho}{1-\rho} \big)^{\frac{(1+2\epsilon)n}{k}}$ appearing in \eqref{eq:cnv_final} is lower bounded by $\delta' > 0$ as long as $n \le \frac{k \log (k\delta')}{ (1+2\epsilon)\log\frac{1-\rho}{\rho} }$.  Since $\epsilon$ may be arbitrarily small and $\log(k\delta') = (\log k)(1+o(1))$, we deduce that the following condition is necessary for attaining arbitrarily small error probability:
\begin{equation}
    n \ge \frac{k\log k}{\log\frac{1-\rho}{\rho}} (1-\eta),
\end{equation}
where $\eta > 0$ is arbitrarily small.  This completes the proof of Theorem \ref{thm:conv}.

\section{Other Observation Models} \label{sec:asymm}

While we have focused on the symmetric noise model \eqref{eq:gt_symm_model} for concreteness, most of our algorithms and analysis techniques can be extended to other observation models.  In this section, we present some of the resulting bounds for three different models: The noiseless model \eqref{eq:gt_noiseless_model}, the Z-channel model,
\begin{gather}
    P_{Y|U}(0|0) = 1, \quad P_{Y|U}(1|0) = 0, \\
    P_{Y|U}(0|1) = \rho, \quad P_{Y|U}(1|1) = 1-\rho,
\end{gather}
and the reverse Z-channel model, 
\begin{gather}
    P_{Y|U}(0|0) = 1-\rho, \quad P_{Y|U}(1|0) = \rho, \\
    P_{Y|U}(0|1) = 0, \quad P_{Y|U}(1|1) = 1,
\end{gather}
where in both cases we define $U = \vee_{j \in S} X_j$.  That is, we pass the noiseless observation through the suitable binary channel; see Figure \ref{fig:Zchannels} for an illustration.  Under the Z-channel model, positive tests indicate with certainty that a defective item is included, whereas under the reverse Z-channel model, negative tests indicate with certainty that no defective item is included.  While the two channels have the same capacity, it is interesting to ask whether one of the two is fundamentally more difficult to handle in the context of group testing.  We provide a partial answer to this question in the adaptive setting; see also \cite{Sca18b} for the non-adaptive setting.
    
\begin{figure}
    \begin{centering}
        \includegraphics[width=0.2\columnwidth]{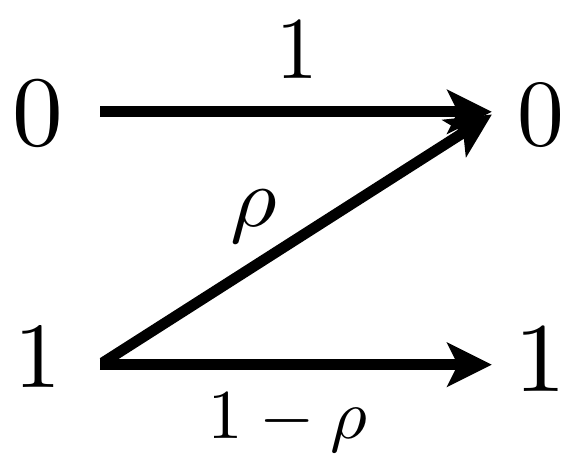} \qquad
        \includegraphics[width=0.2\columnwidth]{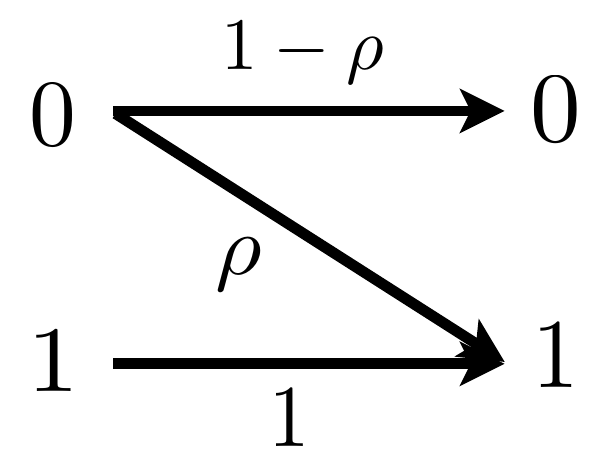} 
        \par
    \end{centering}
    
    \caption{Z-channel (Left) and reverse Z-channel (Right). \label{fig:Zchannels}}
\end{figure}

\subsection{Noiseless setting}

In the noiseless setting, the final step of Algorithm \ref{alg:steps} is much simpler: Simply test the items in $\Shat_2$ individually once each.  This only requires $k$ tests, and succeeds with certainty, yielding the following.

\begin{thm} \label{thm:noiseless}
    Under the scaling $k = \Theta(p^{\theta})$ for some $\theta \in (0,1)$, there exists a two-stage algorithm for noiseless adaptive group testing that succeeds with probability approaching one, with a number of tests bounded by
    \begin{equation}
    n \le \bigg(k \log_2\frac{p}{k}\bigg)(1+o(1)). \label{eq:noiseless_result}
    \end{equation}
    Moreover, there exists a computationally efficient two-stage algorithm that succeeds with probability approaching one, with a number of tests bounded by
    \begin{equation}
    n \le \frac{1}{\log 2}\bigg(k \log_2\frac{p}{k}\bigg)(1+o(1)). \label{eq:noiseless_result2}
    \end{equation}
\end{thm}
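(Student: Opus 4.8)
The plan is to reuse the two-stage template of Algorithm~\ref{alg:steps} essentially verbatim, setting $\rho = 0$ and exploiting the fact that a single-item noiseless test reveals that item's status exactly. For the first (information-theoretic) bound \eqref{eq:noiseless_result}, I would take the first stage to be the non-adaptive threshold decoder of \cite{Sca15b} (Appendix~\ref{app:partial_gamma}), which for any fixed small $\alpha_1 > 0$ returns a set $\Shat_1$ of cardinality $k$ with $\max\{|\Shat_1 \backslash S|, |S \backslash \Shat_1|\} \le \alpha_1 k$ with probability approaching one. Specializing the Step~1 bound from the proof of Theorem~\ref{thm:ach} to $\rho = 0$, where $\log 2 - H_2(0) = \log 2$, this costs $n_1 \le \big(\frac{k\log\frac{p}{k}}{\log 2}\big)(1+o(1)) = \big(k\log_2\frac{p}{k}\big)(1+o(1))$ tests.

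The second stage then differs from Theorem~\ref{thm:ach} only in its final step. Step~2a applies the NCOMP variation of Appendix~\ref{app:ncomp} to $\{1,\dotsc,p\} \setminus \Shat_1$ to recover the at-most-$\alpha_1 k$ false negatives, at a cost of $O\big(\alpha_1 k\log\frac{p}{\alpha_1 k}\big)$ tests; since $\alpha_1$ is arbitrary, this is a vanishing fraction of $n_1$ and is absorbed into the $(1+o(1))$ factor as $\alpha_1 \to 0$. For Step~2b, instead of the $\Theta(\log k)$ repetitions needed to beat noise, I would test each of the $k$ items of $\Shat_1$ exactly once: because a single-item test returns $Y = \openone\{j \in S\}$ with no error, this identifies all false positives deterministically using exactly $k$ tests, and $k = o\big(k\log\frac{p}{k}\big)$ because $\log\frac{p}{k}\to\infty$ (as $k = \Theta(p^{\theta})$ with $\theta < 1$). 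As in Algorithm~\ref{alg:steps}, Steps~2a and~2b use only the outcome-dependent quantity $\Shat_1$, so they form one combined round and the procedure is genuinely two-stage. Summing the three contributions gives $n \le \big(k\log_2\frac{p}{k}\big)(1+o(1))$, proving \eqref{eq:noiseless_result}.

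For the efficient bound \eqref{eq:noiseless_result2}, I would leave Steps~2a and~2b untouched and replace only the first stage by separate decoding of items \cite{Sca17b}, which is computationally efficient and whose partial-recovery threshold exceeds the information-theoretic one by exactly the factor $\frac{1}{\log 2}$; specializing its first-stage cost to $\rho = 0$ gives $\frac{1}{\log 2}\big(k\log_2\frac{p}{k}\big)(1+o(1))$, while the lower-order second-stage terms are unchanged.

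Since the noiseless model trivializes the individual-testing step, the result is essentially a corollary of Theorem~\ref{thm:ach}, and I do not expect a substantial obstacle. The only points requiring care are the bookkeeping that both lower-order contributions are truly $o\big(k\log\frac{p}{k}\big)$---the Step~2a term through the $\alpha_1 \to 0$ limit, and the Step~2b term through $\log\frac{p}{k}\to\infty$---and confirming that the first-stage partial-recovery analysis of \cite{Sca15b} remains valid at $\rho = 0$ (it does, since the noiseless channel is the easiest special case and $\log 2 - H_2(\rho)$ is continuous at $\rho = 0$).
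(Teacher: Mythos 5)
Your proposal is correct and takes essentially the same route as the paper: the paper likewise runs Algorithm \ref{alg:steps} with the first stage given by the threshold decoder of \cite{Sca15b} (or separate decoding of items \cite{Sca17b} for the efficient variant, incurring the $\frac{1}{\log 2}$ factor), and observes that in the noiseless case Step 2b reduces to testing each item of $\Shat_1$ individually once, which succeeds with certainty and costs only $k = o\big(k\log\frac{p}{k}\big)$ tests. Your additional bookkeeping (Step 2a absorbed via $\alpha_1 \to 0$, validity of the partial-recovery stage at $\rho = 0$) matches what the paper leaves implicit.
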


The upper bound \eqref{eq:noiseless_result} is tight, as it matches the so-called counting bound, e.g., see \cite{Joh15}.  To our knowledge, the minimum number of stages used to attain this bound previously for all $\theta \in (0,1)$ was four \cite{Dam12}.  It is worth noting, however, that the algorithm of \cite{Dam12} has low computational complexity, unlike Algorithm \ref{alg:steps}.


The bound \eqref{eq:noiseless_result} does not contradict the converse bound of M\'ezard and Toninelli \cite{Mez11}; the latter states that any two-stage algorithm with {\em zero} error probability must have an {\em average} number of tests of $\frac{1}{\log 2} \big(k \log_2\frac{p}{k}\big)(1+o(1))$ or higher.  In contrast, \eqref{eq:noiseless_result} corresponds to {\em vanishing} error probability and a {\em fixed} number of tests.

\subsection{Z-channel model}

Under the Z-channel model, the capacity-based converse bound of \cite{Bal13} turns out to be tight for all $\theta \in (0,1)$, as stated in the following.

\begin{thm} \label{thm:ach_Z}
    Under the noisy group testing model with Z-channel noise having parameter $\rho \in (0,1)$, and a number of defectives satisfying $k = \Theta(p^{\theta})$ for some $\theta \in (0,1)$, there exists a three-stage adaptive algorithm achieving vanishing error probability with
    \begin{equation}
        n \le \frac{k\log\frac{p}{k}}{ C(\rho) } (1+o(1)), \label{eq:n_Z}
    \end{equation}
    where $C(\rho)$ is the capacity of the Z-channel in nats.
\end{thm}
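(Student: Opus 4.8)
The plan is to reuse the three-stage template of Algorithm \ref{alg:steps_ref}, but to exploit the defining one-sided feature of the Z-channel: a test whose noiseless outcome is $U=0$ returns $Y=0$ with certainty, so a single positive outcome is an incontrovertible certificate that the tested set contains a defective item. This is precisely what removes the $k\log k$ overhead that is unavoidable for the symmetric model (Theorem \ref{thm:conv}), and it is the engine behind the clean capacity-matching bound \eqref{eq:n_Z}. For Stage 1 I would invoke the non-adaptive partial-recovery guarantee for the Z-channel (the analogue of the analysis recapped in Appendix \ref{app:partial_gamma}, with $\log 2 - H_2(\rho)$ replaced by the Z-channel capacity $C(\rho)$) to produce an estimate $\Shat_1$ of cardinality $k$ with $d(S,\Shat_1)\le\alpha_1 k$ at cost $\frac{k\log\frac{p}{k}}{C(\rho)}(1+o(1))$, where $\alpha_1=\alpha_1(p)\to 0$ slowly. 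This is the dominant term, and everything that follows must be shown to be $o\big(k\log\frac{p}{k}\big)$.

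Stage 2a is identical in spirit to Theorems \ref{thm:ach}--\ref{thm:ach2}: run the NCOMP variant of Appendix \ref{app:ncomp} on $\{1,\dots,p\}\setminus\Shat_1$ to recover the at most $\alpha_1 k$ false negatives, costing $O\big(\alpha_1 k\log\frac{p}{\alpha_1 k}\big)=o\big(k\log\frac{p}{k}\big)$. The crux is the false-positive removal. In Stage 2b I would test each of the $k$ items of $\Shat_1$ individually only $\ncheck$ times and keep the $k-\alpha_2 k$ items with the most positives. Since every non-defective produces exactly zero positives, whereas a defective produces zero positives only with probability $\rho^{\ncheck}$, a non-defective can survive into $\Shat'_{2b}$ only when more than $\alpha_2 k$ items record zero positives. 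The zero-positive items consist of the at most $\alpha_1 k$ non-defectives together with the ``all-flipped'' defectives, whose number is $\Binomial(k,\rho^{\ncheck})$; choosing $\ncheck=\frac{\log(2/\alpha_2)}{\log(1/\rho)}$ and applying a Chernoff bound keeps the total below $\alpha_2 k$ with high probability, so $\Shat'_{2b}$ contains only defectives, at the negligible cost $k\,\ncheck$.

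I would then let $\alpha_1,\alpha_2\to 0$ slowly and coupled, e.g.\ $\alpha_1=\tfrac12\alpha_2$ with $\alpha_2=1/\log\log k$; then $\ncheck=O(\log\log\log k)=o(\log k)$, so Stage 2b uses $o(k\log k)$ tests. Finally, in Stage 3 I would resolve the $\alpha_2 k$ discarded items by testing each $\ntil=\frac{\log k}{\log(1/\rho)}(1+\eta)$ times and declaring an item defective the instant it returns a positive; a union bound gives a miss probability $\le\alpha_2 k\,\rho^{\ntil}=\alpha_2 k^{-\eta}\to0$, no non-defective ever survives, and the cost $\alpha_2 k\,\ntil=\frac{\alpha_2}{\log(1/\rho)}k\log k\,(1+\eta)$ is $o(k\log k)$ precisely because $\alpha_2\to0$. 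Summing the four contributions yields $n\le\frac{k\log\frac{p}{k}}{C(\rho)}(1+o(1))$, and for the computationally efficient variant one would substitute separate decoding of items in Stage 1 as in the footnote to Theorem \ref{thm:ach2}.

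I expect the main obstacle to be twofold. First, one must verify that Stage 1 partial recovery to a slowly-vanishing fraction $\alpha_1(p)k$ still costs only $\frac{k\log\frac{p}{k}}{C(\rho)}(1+o(1))$, i.e.\ that letting $\alpha_1\to0$ does not resurrect the secondary mutual-information and concentration terms that plague exact recovery; this requires choosing the rate of $\alpha_1\to0$ compatibly with the fluctuation analysis of Appendix \ref{app:partial_gamma}. Second, and more essentially, the refinement must be genuinely $o\big(k\log\frac{p}{k}\big)$ rather than merely $O(k\log k)$: this hinges entirely on the fact that a non-defective item can never generate a positive outcome, which is what lets Stage 2b separate with a vanishing per-item budget $\ncheck$ and lets Stage 3 operate on only $\alpha_2 k$ items. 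This same asymmetry is absent in the reverse Z-channel and in the symmetric model, which is exactly why those settings retain an irreducible $\Theta(k\log k)$ term.
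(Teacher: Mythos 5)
Your overall template---three stages, NCOMP on the reduced ground set for the false negatives, and individual testing that exploits the fact that under the Z-channel a single positive outcome certifies defectiveness---is exactly the paper's approach. But there is a genuine gap in how you set up Stage 1. You prescribe vanishing fractions $\alpha_1(p),\alpha_2(p)\to 0$ (e.g.\ $1/\log\log k$), and therefore need a Z-channel partial recovery guarantee at distance $\dmax=\alpha_1(p)k=o(k)$ with cost still $\frac{k\log\frac{p}{k}}{C(\rho)}(1+o(1))$. That is precisely the extension the paper declares out of reach: Appendix \ref{app:partial_gamma} states that for the Z-channel the analysis is only carried out for $\dmax=\Theta(k)$, because in that regime every $\ell>\dmax$ satisfies $\ell=\Theta(k)$ and one can rely exclusively on the generic concentration bound \eqref{eq:conc_gen_disc}; allowing $\ell/k\to 0$ would require a small-$\ell$ characterization of $I_\ell$ and a sharper concentration inequality, which were established for the symmetric model (via \eqref{eq:gtn_I_ord} and \eqref{eq:conc_gt_noisy}) but not for the Z-channel. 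You flag this as your ``main obstacle'' but do not close it. The paper's proof shows the obstacle need never be confronted: keep $\alpha_1$ a \emph{fixed} arbitrarily small constant, so Stage 1 needs only the easy $\dmax=\Theta(k)$ result; Stage 3 then costs $O(\alpha_1 k\log k)$, which for $k=\Theta(p^{\theta})$ is at most a constant times $\alpha_1\cdot\frac{\theta}{1-\theta}\,k\log\frac{p}{k}$, and since $\alpha_1$ is arbitrary this is absorbed into the $(1+o(1))$ of \eqref{eq:n_Z} by the standard diagonalization over $p$. In short, the coupled vanishing-$\alpha$ schedule you impose manufactures exactly the difficulty the paper's constant-$\alpha_1$ argument is designed to avoid.

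Two smaller points. First, your Stage 2b constants fail as written: with $\ncheck=\frac{\log(2/\alpha_2)}{\log(1/\rho)}$ and $\alpha_1=\alpha_2/2$, the expected number of zero-positive items is $\alpha_1 k + k\rho^{\ncheck}=\alpha_2 k$, i.e.\ exactly the target, so the count exceeds $\alpha_2 k$ with probability roughly $1/2$ and no Chernoff bound can rescue it; you would need, say, $\ncheck=\frac{\log(4/\alpha_2)}{\log(1/\rho)}$ so the mean sits strictly below the threshold. Second, the paper's Stage 2b is simpler and avoids this entirely: include in $\Shat'_{2b}$ every item with at least one positive outcome (no top-$(k-\alpha_2 k)$ selection), which can never admit a non-defective, and then a Markov-inequality argument with $\ncheck=\log\log k$ shows that fewer than $\alpha_1 k$ defectives are left over for Stage 3.
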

\begin{proof}
    The analysis is similar to that of the symmetric noise model ({\em cf.}, Theorem \ref{thm:ach2}), so we omit most of the details.  
    
    In the first stage, we use i.i.d.~Bernoulli testing with parameter $\nu > 0$ chosen to ensure that the induced distribution $P_U$ of $U = \vee_{j \in S} X_i$ equals the capacity-achieving input distribution of the Z-channel.  Under this choice, a straightforward extension of the analysis of \cite{Sca15b} (see the final part of Appendix \ref{app:partial_gamma} for details) reveals that we can find a set $\Shat_1$ of cardinality $k$ such that $d(S,\Shat_1) \le \alpha_1 k$ with $n$ satisfying \eqref{eq:n_Z}, where $d$ is defined in \eqref{eq:dist}, and $\alpha_1 > 0$ is arbitrarily small.
    
    The second stage is similar to steps 2a and 2b in Algorithm \ref{alg:steps_ref}.  The modifications required in step 2a are stated in Appendix \ref{app:ncomp}, and step 2b is in fact simpler:  We include a given item in $\Shat'_{2b}$ if and only if {\em any} of its tests returned positive.  Due to the nature of the Z-channel, no non-defectives are included in $\Shat'_{2b}$.  On the other hand, the probability of a positive item returning negative on all $\ncheck$ tests is given by $\rho^{\ncheck}$, and is asymptotically vanishing if $\ncheck = \log \log k$ (say).  Hence, by Markov's inequality, we have with probability approaching one that the number of defective items that fail to be placed in $\Shat'_{2b}$ is smaller than $\alpha_1 k$ with probability approaching one.  Moreover, the required number of tests is $O(k \log \log k)$, which is asymptotically negligible.
    
    In the third stage, as in Algorithm \ref{alg:steps_ref}, we test each item individually $\ntil$ times.  Here, however, we let $\Shat'_3$ contain the items that returned positive in {\em any} test.  There are again no false positives, and a given defective item is a false negative with probability $\rho^{\ntil}$.  By the union bound and the fact that there are at most $2\alpha_1 k$ items and $\alpha_1 k$ defective items, we readily deduce vanishing error probability as long as $\ntil = O(\log k)$, meaning the total number of tests is $O(\alpha_1 k \log k)$.  This is asymptotically negligible, since $\alpha_1$ is arbitrarily small.
\end{proof}

This result shows that under Z-channel noise, the conjecture of the optimal (inverse) coefficient to $k\log\frac{p}{k}$ equaling the channel capacity (e.g., see \cite{Bal13}) is true for all $\theta \in (0,1)$, in stark contrast to the symmetric noise model.

It is worth noting that the converse analysis of Section \ref{sec:conv} does not apply to the Z channel model.  This is because any analog of Lemma \ref{lem:chg_msr} is impossible:  If there exists a test outcome $y_i = 1$ where $j$ is the only defective included, then $\PP_{S \setminus \{j\}}[\yv] = 0$, meaning we cannot hope for an inequality of the form \eqref{eq:chg_msr}.

\subsection{Reverse Z-channel model}

Under the reverse Z-channel model, we have the following analog of the converse bound in Theorem \ref{thm:conv}.

\begin{thm} \label{thm:conv_RZ}
    Consider the noisy group testing setup with reverse Z-channel noise having parameter $\rho \in (0,1)$, $S$ distributed according to \eqref{eq:S_modified}, and $k \to \infty$ with $k \le \frac{p}{2}$.  For any adaptive algorithm, in order to achieve $\pe \to 0$, it is necessary that
    \begin{equation}
    n \ge \max\bigg\{ \frac{ k\log\frac{p}{k} }{ C(\rho) }, \frac{k \log k}{ \log\frac{1}{\rho} } \bigg\} (1-o(1)), \label{eq:conv_RZ}
    \end{equation}
    where $C(\rho)$ is the capacity of the Z-channel in nats.
\end{thm}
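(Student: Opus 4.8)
The converse \eqref{eq:conv_RZ} consists of two terms, and the first, $\frac{k\log\frac{p}{k}}{C(\rho)}(1-o(1))$, is exactly the capacity-based bound of \cite{Bal13} specialized to the reverse Z-channel (whose capacity equals that of the Z-channel, namely $C(\rho)$); no further work is needed for it. The plan for the second term $\frac{k\log k}{\log\frac{1}{\rho}}(1-o(1))$ is to rerun the proof of Theorem \ref{thm:conv} almost line for line, with the single substantive change occurring in the change-of-measure step, Lemma \ref{lem:chg_msr}.

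First I would carry over, unchanged, every ingredient that does not reference the observation model: the reduction in which a genie reveals a ground set of $2k$ items containing $k-1$ or $k$ defectives, the definition of $N_j$ as the number of tests in which $j$ is the unique defective, Lemma \ref{lem:js}, and the purely combinatorial identity \eqref{eq:avg_chg_msr_1}--\eqref{eq:avg_chg_msr_5} that rewrites $\sum_{S'\in\Sc_{k-1,2k}}\PP_{S'}[\mathrm{error}]$ as $\sum_{S\in\Sc_{k,2k}}\sum_{j\in S}\PP_{S\setminus\{j\}}[\Yc_S]$. None of these depend on $P_{Y|U}$, so they apply verbatim.

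The heart of the argument is the reverse-Z analog of \eqref{eq:chg_msr}. Deleting the unique defective $j$ from a test flips its noiseless outcome $U$ from $1$ to $0$. Under the reverse Z-channel, $U=1$ forces $Y=1$ deterministically, so any $\yv$ with $\PP_S[\yv]>0$ must satisfy $y^{(i)}=1$ on every test in which $j$ is the unique defective; on each such test the likelihood ratio is $\PP_{S\setminus\{j\}}[y^{(i)}=1]/\PP_S[y^{(i)}=1]=\rho/1=\rho$, and every other test is unaffected by the deletion. Hence $\PP_{S\setminus\{j\}}[\yv]=\PP_S[\yv]\,\rho^{\,n_j(\yv)}\ge\PP_S[\yv]\,\rho^{(1+2\epsilon)n/k}$ whenever $n_j(\yv)\le\frac{(1+2\epsilon)n}{k}$, which is precisely \eqref{eq:chg_msr} with $\frac{\rho}{1-\rho}$ replaced by $\rho$. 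This is exactly the dual of the obstruction that prevents the same argument for the forward Z-channel, where instead $U=0$ forces $Y=0$ and the ratio would degenerate to $0$. With this one substitution, the second part of Lemma \ref{lem:chg_msr} and all of Lemma \ref{lem:cnv_final} go through with $\big(\frac{\rho}{1-\rho}\big)^{(1+2\epsilon)n/k}$ replaced throughout by $\rho^{(1+2\epsilon)n/k}$.

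To finish, mirroring the end of Theorem \ref{thm:conv}, the quantity $k\,\rho^{(1+2\epsilon)n/k}$ stays bounded below by a constant $\delta'>0$ as long as $n\le\frac{k\log(k\delta')}{(1+2\epsilon)\log\frac{1}{\rho}}$; since $\log(k\delta')=(\log k)(1+o(1))$ and $\epsilon$ may be made arbitrarily small, vanishing error probability forces $n\ge\frac{k\log k}{\log\frac{1}{\rho}}(1-o(1))$, which is the second term of \eqref{eq:conv_RZ}. I do not expect any genuine obstacle beyond verifying the direction of the change of measure carefully: the entire point is that the reverse Z-channel is the model for which $U=1\Rightarrow Y=1$ makes the would-be measure-zero outcomes also have zero probability under $\PP_S$, so the technique succeeds here precisely where it fails for the Z-channel.
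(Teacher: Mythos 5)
Your proposal is correct and follows essentially the same route as the paper: the first term is quoted from \cite{Bal13}, and the second is obtained by rerunning the proof of Theorem \ref{thm:conv} with $\frac{\rho}{1-\rho}$ replaced by $\rho$ in Lemma \ref{lem:chg_msr} and its downstream uses, exactly as the paper does. Your observation that the degenerate zero-probability outcomes cause no trouble here (since $\PP_{S\setminus\{j\}}[\yv]=0$ with $j$ deleted can only occur when $\PP_S[\yv]=0$, making the inequality trivial) is also precisely the remark the paper makes to distinguish this case from the forward Z-channel.
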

\begin{proof}
    The first bound in \eqref{eq:conv_RZ} is the capacity-based bound from \cite{Bal13}.  On the other hand, the second bound follows from a near-identical analysis to the proof of Theorem \ref{thm:conv}, with the only difference being that $\frac{\rho}{1-\rho}$ is replaced by $\rho$ in \eqref{eq:chg_msr} and the subsequent equations that make use of \eqref{eq:chg_msr}.
    
    We note that unlike the $Z$-channel, the cases where one of $\PP_S[\yv]$ and $\PP_{S \setminus \{j\}}[\yv]$ is zero and the other is non-zero are not problematic.  Specifically, this only occurs when $\PP_S[\yv] = 0$, and in this case, any inequality of the form \eqref{eq:chg_msr} is trivially true.
\end{proof}

Interestingly, this result shows that reverse Z-channel noise is more difficult to handle than Z-channel noise by an arbitrarily large factor as $\theta$ gets closer to one, even though the two channels have the same capacity. 

\section{Conclusion} \label{sec:conc}

\noindent We have developed both information-theoretic limits and practical performance guarantees for noisy adaptive group testing.  Some of the main implications of our results include the following:
\begin{itemize}
    \item Under the scaling $k = \Theta(p^{\theta})$, for most $\theta \in (0,1)$, our information-theoretic achievability guarantees for the symmetric noise model are significantly better than the best known non-adaptive achievability guarantees, and similarly when it comes to practical guarantees.
    \item Our converse for the symmetric noise model reveals that $n = \Omega(k \log k)$ is necessary, and hence, the implied constant to $n = \Theta\big(k \log \frac{p}{k} \big)$ must grow unbounded as $\theta \to 1$.  This phenomenon also holds true for the reverse Z-channel noise model, but not for the Z-channel noise model.
    \item Our bounds are tight or near-tight in several cases of interest, including small values of $\theta$, and low noise levels with $\theta$ close to one.  Moreover, in the noiseless case, we obtain the optimal threshold using a two-stage algorithm; previously the smallest known number of stages was four.
\end{itemize}
It is worth noting that our two-stage (or three-stage) algorithm and its analysis remain applicable when {\em any} non-adaptive algorithm is used in the first stage, as long as it identifies a suitably high fraction of the defective set.  Hence, improved practical or information-theoretic guarantees for partial recovery in the non-adaptive setting immediately transfer to improved exact recovery guarantees in the adaptive setting.  

\appendix

\subsection{Non-Adaptive Partial Recovery Result with $\dmax = \Theta(k^{\gamma})$} \label{app:partial_gamma}

The analysis of \cite{Sca15b} considers the case that the maximum distance ({\em cf.}, \eqref{eq:pe_dmax}--\eqref{eq:dist}) scales as $\dmax = \Theta(k)$.  In this section, we adapt the analysis therein to the case $\dmax = \Theta(k^{\gamma})$ for some $\gamma \in (0,1)$.  This generalization is useful for the refined achievability bound given in Section \ref{sec:ach_ref} ({\em cf.}, Theorem \ref{thm:ach2}), and is also of interest in its own right.

\subsubsection{Notation} Recall that $S$ is uniform on the set of subsets of $\{1,\dotsc,p\}$ having a given cardinality $k$.  As in \cite{Sca15b}, we consider non-adaptive i.i.d.~Bernoulli testing, where each item is placed in a given test with probability $\frac{\nu}{k}$ for some $\nu > 0$.  We focus our attention on $\nu = \log 2$, though we will still write $\nu$ for the parts of the analysis that apply more generally.  The test matrix is denoted by $\Xv \in \{0,1\}^{n \times p}$ (i.e., the $i$-th row is $X^{(i)}$), and the notation $\Xv_s$ denotes the sub-matrix obtained by keeping only the columns indexed by $s \subseteq \{1,\dotsc,p\}$.

Next, we recall some notation from \cite{Sca15b}.  It will prove convenient to work with random variables that are implicitly conditioned on a fixed value of $S$, say $s=\{1,\dotsc,k\}$.  We write $P_{Y|X_{s}}$ for the conditional test outcome probability, where $X_s$ is the subset of the test vector $X$ indexed by $s$.  Moreover, we write
\begin{align}
    P_{X_s Y}(x_s,y) &:= P_{X}^{k}(x_s)P_{Y|X_s}(y|x_s), \label{eq:distr_sl} \\
P_{\Xv_s\Yv}(\xv_s,\yv) &:= P_X^{n \times k}(\xv_s) P^{n}_{Y|X_s}(\yv|\xv_{s}), \label{eq:vector_distr} 
\end{align}
where $P^{n}_{Y|X_s}(\cdot|\cdot)$ is the $n$-fold product of $P_{Y|X_s}(\cdot|\cdot)$, and $P_X^{(\cdot)}$ denotes the i.i.d.~$\mathrm{Bernoulli}\big(\frac{\nu}{k}\big)$ distribution for a vector or matrix of the size indexed in the superscript.  The random variables $(X_s,Y)$ and $(\Xv_s,\Yv)$ are distributed as
\begin{align}
    (X_s,Y) &\sim  P_{X_s Y}, \label{eq:joint_dist} \\
(\Xv_s,\Yv) &\sim P_{\Xv_s\Yv}, \label{eq:joint_dist_n}
\end{align}
and the remaining entries of the measurement matrix are distributed as $\Xv_{s^c} \sim P_X^{n \times (p-k)}$, independent of $(\Xv_s,\Yv)$.

In our analysis, we consider partitions of the defective set $s$ into two sets $\sdif \ne \emptyset$ and $\seq$.  One can think of $\seq$ as corresponding to an overlap $s \cap \sbar$ between the true set $s$ and some incorrect set $\sbar$, with $\sdif$ corresponding to the indices $s \backslash \sbar$ in one set but not the other.  For a fixed defective set $s$, and a corresponding pair $(\sdif,\seq)$, we write
\begin{align}
    P_{Y|X_{\sdif}X_{\seq}}(y|x_{\sdif},x_{\seq}) &:= P_{Y|X_s}(y|x_s), \label{eq:p_split2}
\end{align}
where $P_{Y|X_s}$ is the marginal distribution of \eqref{eq:vector_distr}.  This form of the conditional error probability allows us to introduce the marginal distribution
\begin{align}
    P_{Y|X_{\seq}}(y|x_{\seq}) &:= \sum_{x_{\sdif}} P_X^{\ell}(x_{\sdif})P_{Y|X_{\sdif}X_{\seq}}(y|x_{\sdif},x_{\seq}),
\end{align}
where $\ell := |\sdif| = k - |\seq|$.  Using the preceding definitions, we introduce the \emph{information density} \cite{Pol10}
\begin{align}
    \imath^n(\xv_{\sdif}; \yv | \xv_{\seq} ) &:= \sum_{i=1}^{n} \imath(x^{(i)}_{\sdif}; y^{(i)} | x^{(i)}_{\seq}), \label{eq:idens_bn} \\
    \imath(x_{\sdif}; y | x_{\seq}) &:= \log\frac{ P_{Y|X_{\sdif}X_{\seq}}(y|x_{\sdif},x_{\seq}) }{ P_{Y|X_{\seq}}(y|x_{\seq}) } \label{eq:idens_b}
\end{align}
where $(\cdot)^{(i)}$ denotes the $i$-th entry (respectively, row) of a vector (respectively, matrix).  Averaging \eqref{eq:idens_b} with respect to $(X_s,Y)$ in \eqref{eq:joint_dist} yields a conditional mutual information, which we denote by
\begin{equation}
    I_{\ell} := I(X_{\sdif};Y|X_{\seq}), \label{eq:condMI}
\end{equation}
where $\ell := |\sdif|$; by symmetry, the mutual information for each $(\sdif,\seq)$ depends only on this quantity.

\subsubsection{Choice of decoder} We use the same information-theoretic threshold decoder as that in \cite{Sca15b}: Fix the constants $\{\gamma_{\ell}\}_{\ell = \dmax + 1}^{k}$, and search for a set $s$ of cardinality $k$ such that
\begin{equation}
    \imath^n(\Xv_{\sdif};\Yv|\Xv_{\seq}) \ge \gamma_{|\sdif|}, \quad \forall (\sdif,\seq)\text{ such that }|\sdif| > \dmax. \label{eq:dec}
\end{equation}
If multiple such $s$ exist, or if none exist, then an error is declared.  This decoder is inspired by analogous thresholding techniques from the channel coding literature \cite{Fei54,Han03}.

\subsubsection{Useful existing results}

We build heavily on several intermediate results given in \cite{Sca15}, stated as follows:
\begin{itemize}
	\item {\bf Initial bounds.} Since the analysis is the same for any defective set $s$ of cardinality $k$, we assume without loss of generality that $s = \{1,\dotsc,k\}$.  The initial non-asymptotic bound of \cite{Sca15b} takes the form
	\begin{equation}
        \pe(\dmax) \le \PP\bigg[ \bigcup_{(\sdif,\seq)\,:\,|\sdif| > \dmax} \bigg\{ \imath^n(\Xv_{\sdif}; \Yv | \Xv_{\seq} ) \le \log {{p-k} \choose |\sdif|} + \log\bigg(\frac{k}{\delta_1}{k \choose |\sdif|}\bigg) \bigg\} \bigg] + \delta_1
	\end{equation}
	for any $\delta_1  > 0$. A simple consequence of this non-asymptotic bound is the following: For any positive constants $\{\delta_{2,\ell}\}_{\ell = \dmax+1}^k$, if the number of tests is at least
	\begin{equation}
       n \ge \max_{\ell=\dmax+1,\dotsc,k} \frac{ \log{{p-k} \choose \ell} + \log\big(\frac{k}{\delta_1}{k \choose \ell}\big) }{ I_{\ell} (1-\delta_{2,\ell}) }, \label{eq:final_ach}
	\end{equation}
    and if each information density satisfies a concentration bound of the form
    \begin{equation}
         \PP\big[ \imath^n(\Xv_{\sdif}; \Yv | \Xv_{\seq}) \le n(1 - \delta_{2,\ell})I_{\ell} \big] \le \psi_{\ell}(n,\delta_{2,\ell}), \label{eq:psi_ach}
    \end{equation}
    for some functions $\{\psi_{\ell}\}_{\ell=\dmax+1}^{k}$, then
    \begin{equation}
        \pe(\dmax) \le \sum_{\ell=\dmax + 1}^k{k \choose \ell}\psi_\ell(n,\delta_{2,\ell}) + \delta_1. \label{eq:nonasymp_ach}
    \end{equation}
	\item {\bf Characterization of mutual information.} Under the symmetric noise model with crossover probability $\rho \in \big(0,\frac{1}{2}\big)$, the conditional mutual information $I_{\ell}$ behaves as follows as $k \to \infty$:
    \begin{itemize}
        \item If $\frac{\ell}{k} \to 0$, then
        \begin{equation}
            I_{\ell} = \bigg(e^{-\nu}\nu \frac{\ell}{k} (1-2\rho)\log\frac{1-\rho}{\rho}\bigg)(1+o(1)). \label{eq:gtn_I_ord}
        \end{equation}
        \item If $\frac{\ell}{k} \to \alpha \in (0,1]$, then
        \begin{equation}
            I_{\ell} = e^{-(1-\alpha)\nu} \big(H_2\big(e^{-\alpha \nu} \star \rho\big) - H_2(\rho)\big) (1+o(1)). \label{eq:gtn_I_const}
        \end{equation}
    \end{itemize}
	\item {\bf Concentration bounds.} The following concentration bounds provide explicit choices for $\psi_{\ell}$ satisfying \eqref{eq:psi_ach}:
    \begin{itemize}
        \item For all $\ell$ and $\delta > 0$, we have
        \begin{equation}
            \PP\Big[ \big|\imath^n(\Xv_{\sdif}; \Yv | \Xv_{\seq}) - nI_{\ell} \big| \ge n\delta \Big] \le 2\exp\bigg(- \frac{\delta^2 n}{4(8+\delta)} \bigg) \label{eq:conc_gen_disc}
        \end{equation}
        for all $(\sdif,\seq)$ with $|\sdif|=\ell$.
        \item If $\frac{\ell}{k} \to 0$, then for any $\epsilon > 0$ and $\delta_{2} > 0$ (not depending on $p$), the following holds for sufficiently large $p$:
        \begin{equation}
            \PP\Big[ \imath^n(\Xv_{\sdif}; \Yv | \Xv_{\seq}) \le nI_{\ell}(1-\delta_{2}) \Big] \le \exp\bigg(-n\frac{\ell}{k} e^{-\nu}\nu\bigg( \frac{\delta_2^2 (1-2\rho)^2}{2(1+\frac{1}{3}\delta_2(1-2\rho))} \bigg)(1-\epsilon)\bigg). \label{eq:conc_gt_noisy}
        \end{equation}
        for all $(\sdif,\seq)$ with $|\sdif|=\ell$.
    \end{itemize}
     
\end{itemize}

\noindent With these tools in place, we proceed by obtaining an explicit bound on the number of tests for the case $\dmax = \Theta(k^{\gamma})$.  

\subsubsection{Bounding the error probability} We split the summation over $\ell$ in \eqref{eq:nonasymp_ach} into two terms:
\begin{equation}
    T_1 := \sum_{\ell=\dmax + 1}^{\frac{k}{\sqrt{\log k}}}{k \choose \ell}\psi_\ell(n,\delta_{2}^{(1)}), \quad T_2 := \sum_{\ell=\frac{k}{\sqrt{\log k}}}^{k}{k \choose \ell}\psi_\ell(n,\delta_{2}^{(2)}), \label{eq:T1T2}
\end{equation}
where we have let $\delta_{2,\ell}$ equal a given value $\delta_{2}^{(1)} \in (0,1)$ for all $\ell$ in the first sum, and a different value $\delta_{2}^{(2)} \in (0,1)$ for all $\ell$ in the second sum.

To bound $T_1$, we consider $\psi_\ell(n,\delta_{2})$ equaling the right-hand side of \eqref{eq:conc_gt_noisy}.  Letting $c(\delta_2) = e^{-\nu}\nu\big( \frac{\delta_2^2 (1-2\rho)^2}{2(1+\frac{1}{3}\delta_2(1-2\rho))} \big)(1-\epsilon)$ for brevity, we have
\begin{equation}
    T_1 \le k \max_{\ell=\dmax + 1, \dotsc, \frac{k}{\sqrt{\log k}}} {k \choose \ell} e^{-n\cdot\frac{\ell}{k}\cdot c(\delta_2^{(1)})},
\end{equation}
where we have upper bounded the summation defining $T_1$ by $k$ times the maximum.  Re-arranging, we find in order to attain $T_1 \le \delta_1$, it suffices that
\begin{equation}
    n \ge \max_{\ell=\dmax + 1, \dotsc, \frac{k}{\sqrt{\log k}}}  \frac{1}{c(\delta_2^{(1)})} \cdot \frac{k}{\ell} 
    \cdot \bigg( \log {k \choose \ell} + \log\frac{k}{\delta_1} \bigg).
\end{equation}
Writing $\log {k \choose \ell} = \big(\ell \log\frac{k}{\ell}\big)(1+o(1))$, this simplifies to 
\begin{align}
    n &\ge \max_{\ell=\dmax + 1, \dotsc, \frac{k}{\sqrt{\log k}}}  \frac{1}{c(\delta_2^{(1)})} \cdot \bigg( k\log\frac{k}{\ell} + \frac{k}{\ell}\log\frac{k}{\delta_1} \bigg) \label{eq:n_gamma_term} \\
        &= \bigg(\frac{1}{c(\delta_2^{(1)})} \cdot (1-\gamma) k \log k\bigg) (1+o(1)),
\end{align}
since the maximum is achieved by the smallest value $\dmax + 1 = \Theta(k^{\gamma})$, and for that value, the second term is asymptotically negligible compared to the first.  Substituting the definition of $c(\cdot)$ and taking $\epsilon \to 0$, we obtain the condition
\begin{equation}
    n \ge \bigg( \frac{2(1+\frac{1}{3}\delta_2(1-2\rho))}{e^{-\nu}\nu \delta_2^2 (1-2\rho)^2} \bigg) \cdot \Big( (1-\gamma) k \log k\Big) (1+o(1)) \label{eq:conc_final}
\end{equation}

To bound $T_2$, we consider  $\psi_\ell(n,\delta_{2})$ equaling the right-hand side of \eqref{eq:conc_gen_disc} with $\delta = \delta_2^{(2)} I_{\ell}$.  Again upper bounding the summation by $k$ times the maximum, and defining $c'(\delta_2) = \frac{\delta_2^2}{4(8+\delta_2 I_{\ell})}$, we obtain
\begin{equation}
    T_2 \le k\max_{\ell=\frac{k}{\sqrt{\log k}},\dotsc,k}{k \choose \ell} \cdot 2\exp\big(- c'(\delta_2^{(2)}) I_{\ell}^2 n \big).
\end{equation}
It follows that in order to attain $T_2 \le \delta_1$, it suffices that
\begin{equation}
    n \ge \frac{1}{ c'(\delta_2^{(2)}) I_{\ell}^2 } \log\frac{2k\cdot{k \choose \ell}}{\delta_1} \label{eq:n_T2}
\end{equation}
for all $\ell=\frac{k}{\sqrt{\log k}},\dotsc,k$.
By the mutual information characterizations in \eqref{eq:gtn_I_ord}--\eqref{eq:gtn_I_const}, we have $c'(\delta_2^{(2)}) = \Theta(1)$ for any $\delta_2^{(2)} \in (0,1)$, and $I_{\ell}^2 = \Theta\big( \big(\frac{\ell}{k}\big)^2 \big)$.  We consider this fact in the following two cases:
\begin{itemize}
    \item If $\ell = \Theta(k)$, then \eqref{eq:n_T2} simply amounts to $n = \Omega(k)$;
    \item If $\ell = o(k)$, then also writing $\log\frac{2k\cdot{k \choose \ell}}{\delta_1} = \Theta\big( \ell \log \frac{k}{\ell} \big)$, we find that \eqref{eq:n_T2} takes the form $n = \Omega\big( \frac{k^2}{\ell} \log \frac{k}{\ell} \big)$.  The most stringent condition is then provided by the smallest value $\ell = \frac{k}{\sqrt{\log k}}$, yielding $n = \Omega\big( k \cdot \log\log k \cdot \sqrt{\log k} \big)$. 
\end{itemize}
Combining these two cases, we deduce that $T_2$ vanishes for any scaling of the form $n = \Omega\big( k \log \frac{p}{k} \big)$, since $\log \frac{p}{k} = \Theta(\log p) = \Theta(\log k)$ in the sub-linear regime $k = \Theta(p^{\theta})$ with $\theta \in (0,1)$.

\subsubsection{Characterizing the mutual-information based condition \eqref{eq:final_ach}} Recall that we require the number of tests to satisfy \eqref{eq:final_ach}.  For the values of $\ell$ corresponding to $T_1$ in \eqref{eq:T1T2}, we have chosen $\delta_{2,\ell} = \delta_2^{(1)}$, and the mutual information characterization \eqref{eq:gtn_I_ord} yields the condition
\begin{equation}
    n \ge \max_{\ell=\dmax + 1, \dotsc, \frac{k}{\sqrt{\log k}}} \frac{ k\log{\frac{p}{\ell}} + k\log{\frac{k}{\ell}} + \frac{k}{\ell}\log\big(\frac{k}{\delta_1}\big) }{ \big( e^{-\nu}\nu (1-2\rho)\log\frac{1-\rho}{\rho} \big) (1-\delta_{2}^{(1)}) } (1+o(1)), \label{eq:final_ach_T1}
\end{equation}
where we have applied $\log{p-k \choose \ell} = \big(\ell \log{\frac{p}{\ell}}\big)(1+o(1))$ and  $\log{k \choose \ell} = \big(\ell \log{\frac{k}{\ell}}\big)(1+o(1))$ for $\ell = o(k)$.  Writing $k\log{\frac{p}{\ell}} + k\log{\frac{k}{\ell}} = k\log{\frac{p}{k}} + k\log{\frac{k^2}{\ell^2}}$ and recalling that $k = \Theta(p^{\theta})$ and $\dmax = \Theta(k^{\gamma})$, we find that \eqref{eq:final_ach_T1} simplifies to
\begin{equation}
    n \ge \frac{ (1-\theta)k\log p + 2(1-\gamma)\log k }{ \big( e^{-\nu}\nu (1-2\rho)\log\frac{1-\rho}{\rho} \big) (1-\delta_{2}^{(1)}) } (1+o(1)), \label{eq:final_ach_T1a}
\end{equation}
since the maximum over $\ell$ is achieved by the smallest value, $\ell = \dmax + 1 = \Theta(k^{\gamma})$.

For the $\ell$ values corresponding to $T_2$ in \eqref{eq:T1T2}, the condition \eqref{eq:final_ach} was already simplified in \cite{Sca15b}.  It was shown that under the choice $\nu = \log 2$, the dominant condition is that of the highest value, $\ell = k$, and the resulting condition on the number of tests is
\begin{equation}
    n \ge \frac{k\log\frac{p}{k}}{(\log 2 - H_2(\rho)) (1-\delta_2^{(2)})} (1+o(1)). \label{eq:MI1_final}
\end{equation}

\subsubsection{Wrapping up} 

We obtain the final condition on $n$ by combining \eqref{eq:conc_final}, \eqref{eq:final_ach_T1a}, and \eqref{eq:MI1_final}.  We take $\delta_2^{(2)}$ to be arbitrarily small, while renaming $\delta_2^{(1)}$ to $\delta_2$ and letting it remain a free parameter.  Also recalling the choice $\nu = \log 2$, we obtain the following generalization of the partial recovery bound given in \cite{Sca15b}.

\begin{thm}
    Under the symmetric noise model \eqref{eq:gt_symm_model}, in the regime $k = \Theta(p^{\theta})$ and $\dmax = \Theta(k^{\gamma})$ with $\theta,\gamma \in (0,1)$, there exists a non-adaptive group testing algorithm such that $\pe \to 0$ as $p \to \infty$ with a number of tests satisfying
    \begin{equation}
        n \le \inf_{\delta_2 \in (0,1)} \max\big\{ \nMIi, \nMIii(\gamma,\delta_2), \nConc(\gamma,\delta_2) \big\} (1+o(1)),
    \end{equation}
    where $\nMIi$, $\nMIii$, and $\nConc$ are defined in \eqref{eq:nMI1}--\eqref{eq:nConc}.
\end{thm}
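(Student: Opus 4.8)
The plan is to start from the non-asymptotic union bound \eqref{eq:nonasymp_ach} inherited from \cite{Sca15b}, which reduces the task to controlling the single sum $\sum_{\ell=\dmax+1}^{k}\binom{k}{\ell}\psi_\ell(n,\delta_{2,\ell})$ plus a residual $\delta_1$, provided that the mutual-information condition \eqref{eq:final_ach} and the concentration hypothesis \eqref{eq:psi_ach} both hold. The key structural observation is that a single concentration estimate cannot be tight across the whole range of overlap sizes $\ell = |\sdif|$: for $\ell/k \to 0$ the sharper bound \eqref{eq:conc_gt_noisy} tailored to this regime is needed to capture the correct constant, whereas for $\ell = \Theta(k)$ the crude bound \eqref{eq:conc_gen_disc} already suffices because the exponent $I_\ell = \Theta(1)$ is bounded away from zero. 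I would therefore split the sum at $\ell = k/\sqrt{\log k}$ into the two pieces $T_1$ and $T_2$ of \eqref{eq:T1T2}, assigning a separate free parameter $\delta_2^{(1)},\delta_2^{(2)} \in (0,1)$ to each.

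For $T_1$ I would bound the sum by $k$ times its largest summand, substitute \eqref{eq:conc_gt_noisy}, and rearrange; using $\log\binom{k}{\ell} = (\ell\log\frac{k}{\ell})(1+o(1))$ this yields a requirement of the form $n \ge \frac{1}{c(\delta_2^{(1)})}\big(k\log\frac{k}{\ell} + \frac{k}{\ell}\log\frac{k}{\delta_1}\big)$ uniformly over the $T_1$ range, whose maximiser must then be identified. For $T_2$ the analogous manipulation with \eqref{eq:conc_gen_disc} (taking $\delta = \delta_2^{(2)} I_\ell$) produces the condition \eqref{eq:n_T2}; inserting $I_\ell^2 = \Theta((\ell/k)^2)$ from \eqref{eq:gtn_I_ord}--\eqref{eq:gtn_I_const} shows it is dominated by $n = \Omega(k\log\frac{p}{k})$, so $T_2$ never binds. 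In parallel I would evaluate the mutual-information condition \eqref{eq:final_ach}: on the $T_1$ range I substitute the linear estimate \eqref{eq:gtn_I_ord} for $I_\ell$ together with $\log\binom{p-k}{\ell} \approx \ell\log\frac{p}{\ell}$, while on the $T_2$ range I invoke the already-simplified analysis of \cite{Sca15b}, for which the binding value is $\ell = k$ and the constant $\log 2 - H_2(\rho)$ emerges, giving $\nMIi$ as in \eqref{eq:nMI1}.

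The crux — and the step I expect to be most delicate — is the optimisation over $\ell$, since the competing $\ell$-dependences pull in opposite directions: the combinatorial entropy $\log\binom{k}{\ell}$ grows with $\ell$, while the $1/I_\ell$ normalisation contributes a decreasing factor $k/\ell$. I would verify that on the small-$\ell$ range the relevant per-$\ell$ requirements from both the concentration side ($k\log\frac{k}{\ell}$ and $\frac{k}{\ell}\log\frac{k}{\delta_1}$) and the information side ($\frac{1}{I_\ell}(\log\binom{p-k}{\ell}+\log\binom{k}{\ell}) \propto k\log\frac{pk}{\ell^2}$) are monotonically decreasing in $\ell$, so that the binding constraint is the smallest admissible value $\ell = \dmax + 1 = \Theta(k^\gamma)$; evaluating there and discarding the lower-order piece $\frac{k}{\ell}\log\frac{k}{\delta_1} = O(k^{1-\gamma}\log k) = o(k\log k)$ produces exactly $\nConc(\gamma,\delta_2)$ from \eqref{eq:nConc} and exactly $\nMIii(\gamma,\delta_2)$ from \eqref{eq:nMI2}, the latter via $k\log\frac{p}{\ell}+k\log\frac{k}{\ell} = (1-\theta)k\log p + 2(1-\gamma)k\log k$ using $\log k = \theta\log p$. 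A secondary subtlety is checking that the split point $k/\sqrt{\log k}$ is simultaneously small enough for the $\ell/k \to 0$ asymptotics \eqref{eq:gtn_I_ord} and \eqref{eq:conc_gt_noisy} to hold throughout $T_1$, and large enough that $T_2$ yields only the subsumed $n = \Omega(k\log\frac{p}{k})$ requirement. To finish, I would take $\delta_2^{(2)} \to 0$ so that \eqref{eq:MI1_final} reduces to $\nMIi$, rename $\delta_2^{(1)}$ to $\delta_2$ and keep it free, set $\nu = \log 2$, and combine the three surviving conditions by taking their maximum and then the infimum over $\delta_2 \in (0,1)$, giving the stated bound.
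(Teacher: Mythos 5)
Your proposal is correct and follows essentially the same route as the paper's own proof in Appendix~\ref{app:partial_gamma}: the same split of the union bound \eqref{eq:nonasymp_ach} at $\ell = k/\sqrt{\log k}$ into $T_1$ and $T_2$ with separate parameters $\delta_2^{(1)},\delta_2^{(2)}$, the same pairing of \eqref{eq:conc_gt_noisy} with the small-$\ell$ regime and \eqref{eq:conc_gen_disc} with the large-$\ell$ regime, and the same identification of $\ell = \dmax+1 = \Theta(k^\gamma)$ as the binding value yielding $\nConc$ and $\nMIii$ while $T_2$ contributes only conditions subsumed by $n = \Omega\big(k\log\frac{p}{k}\big)$. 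The concluding steps (taking $\delta_2^{(2)} \to 0$ to recover $\nMIi$, renaming $\delta_2^{(1)}$ to $\delta_2$, and setting $\nu = \log 2$) also match the paper exactly.
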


\medskip

{\bf Variation for the Z-channel.} For general $\gamma \in (0,1)$, the preceding analysis is non-trivial to extend to the Z-channel noise model, which we consider in Section \ref{sec:asymm}.  However, it is relatively easy to obtain a partial recovery result for the case $\dmax = \Theta(k)$, and such a result suffices for our purposes.  We outline the required changes here.  We continue to assume that the test matrix $\Xv$ is i.i.d.~Bernoulli, but now the probability of a given entry being one is $\frac{\nu}{k}$ for some $\nu > 0$ to be chosen later.

As was observed in \cite{Sca15b}, the analysis is considerably simplified by the fact that we do not need to consider the case $\frac{\ell}{k} \to 0$.  This means that we can rely exclusively on \eqref{eq:conc_gen_disc}, which is known to hold for {\em any} binary-output noise model \cite{Sca15b}.  Consequently, one finds that the only requirement on $n$ is that \eqref{eq:final_ach} holds, with the conditional mutual information $I_{\ell} = I(X_{\sdif};Y|X_{\seq})$ suitably modified due to the different noise model.  By some asymptotic simplifications and the fact that $\ell = \Theta(k)$ for all $\ell$ under consideration, this condition simplifies to
\begin{equation}
    n \ge \max_{\ell > \dmax} \frac{ \ell \log \frac{p}{k} }{ I_{\ell} } (1+o(1)). \label{eq:final_ach_Z}
\end{equation}
Next, we note that an early result of Malyutov and Mateev \cite{Mal80} (see also \cite{Mal13}) implies that $\frac{\ell}{I_{\ell}}$ is maximized at $\ell = k$.  For completeness, we provide a short proof.  Assuming without loss of generality that $s = \{1,\dotsc,k\}$, and letting $X_{j}^{j'}$ denote the collection $(X_{j},\dotsc,X_{j'})$ for indices $1 \le j \le j' \le k$, we have
\begin{align}
    \frac{I_{\ell}}{\ell}
        &= \frac{1}{\ell} I(X_{k-\ell+1}^{k};Y|X_{1}^{k-\ell}) \label{eq:ratio1} \\
        &= \frac{1}{\ell} \sum_{j=k-\ell+1}^k I( X_j; Y | X_{1}^{j-1} ) \label{eq:ratio2} \\
        &= \frac{1}{\ell} \sum_{j=k-\ell+1}^k \big( H(X_j) - H(X_j|Y,X_{1}^{j-1}) \big), \label{eq:ratio3}
\end{align}
where \eqref{eq:ratio1} follows since $I_{\ell} = I(X_{\sdif};Y|X_{\seq})$ only depends on the sets $(\sdif,\seq)$ through their cardinalities, \eqref{eq:ratio2} follows from the chain rule for mutual information, and \eqref{eq:ratio3} follows since $X_j$ is independent of $X_{1}^{j-1}$.  We establish the desired claim by observing that $\frac{I_{\ell}}{\ell}$ is decreasing in $\ell$: The term $H(X_j)$ is the same for all $j$, whereas the term $H(X_j|Y,X_{1}^{j-1})$ is smaller for higher values of $j$ because conditioning reduces entropy.

Using this observation, the condition in \eqref{eq:final_ach_Z} simplifies to
\begin{equation}
    n \ge \max_{\ell > \dmax} \frac{ k \log \frac{p}{k} }{ I_{k} } (1+o(1)). \label{eq:final_ach_Z2}
\end{equation}
We can further replace $I_k = I(X_s; Y)$ by the capacity of the Z-channel upon optimizing the i.i.d.~Bernoulli parameter $\nu > 0$.  The optimal value is the one that makes $\PP[\vee_{j \in s} X_s = 1]$ the same as $P_U^*(1)$, where $P_U^*$ is the capacity-achieving input distribution of the Z-channel $P_{Y|U}$.\footnote{In fact, this analysis applies to any binary channel $P_{Y|U}$.}


\subsection{Partial Recovery Result with Unknown $k$} \label{app:partial_prob}

In this section, we explain how to adapt the partial recovery analysis of \cite{Sca15b} for the symmetric noise model (as well as that of Appendix \ref{app:partial_gamma} for $\dmax = \Theta(k^{\gamma})$) to the case that $k$ is only known to lie within a certain interval $\Kc$ of length $\Delta = o(\dmax)$, where $\dmax$ is the partial recovery threshold.  Specifically, we argue that for any defective set $s$ with $|s| \in \Kc$, there exists a decoder that knows $\Kc$ but not $|s|$, such that the error probability $\PP[\Shat \ne s \,|\, S=s]$ vanishes under i.i.d.~Bernoulli testing, with the same requirement on $n$ is the case of known $|s|$.  Of course, this also implies that $\PP[\Shat \ne S]$ vanishes under any prior distribution on $S$ such that $|S| \in \Kc$ almost surely.

We consider the same non-adaptive setup of Appendix \ref{app:partial_gamma}, denoting the test matrix by $\Xv \in \{0,1\}^p$ and making extensive use of the information densities defined in in \eqref{eq:idens_bn}--\eqref{eq:idens_b}.  Since $k := |s|$ is unknown, we can no longer assume that the test matrix is i.i.d.~with distribution $P_X \sim \Bernoulli\big( \frac{\nu}{k} \big)$, so we instead use $P_X \sim \Bernoulli\big( \frac{\nu}{\kmax} \big)$, with $\kmax$ equaling the maximum value in $\Kc$.

In the case of known $k$, we considered the decoder in \eqref{eq:dec}, first proposed in \cite{Sca15b}.  In the present setting, we modify the decoder to consider all possible $k$, and to allow $\sdif \cup \seq$ to be a strict subset of $s$.  More specifically, the decoder is defined as follows. For any pair $(\sdif,\seq)$ such that $|\sdif \cup \seq|$ equals some constant $k'$, let $\imath_{k'}^n(\xv_{\sdif};\yv|\xv_{\seq})$ be the information density corresponding to the case that the defective set equals $\sdif \cup \seq$, with an explicit dependence on the cardinality $k'$.  We consider a decoder that searches over all $s \subseteq \{1,\dotsc,p\}$ whose cardinality is in $\Kc$, and seeks a set such that
\begin{equation}
    \imath_{k'}^n(\Xv_{\sdif};\Yv|\Xv_{\seq}) \ge \gamma_{k',\ell}, \quad \forall (\sdif,\seq) \subseteq \tilde{\Sc}_s, \label{eq:modified_dec}
\end{equation}
where $\{\gamma_{k',\ell}\}$ is a set of constants depending on $k' := |\sdif \cup \seq|$ and $\ell := |\sdif|$, and $\tilde{\Sc}_s$ is the set of pairs $(\sdif,\seq)$ satisfying the following:
\begin{enumerate}
    \item $\sdif \subseteq s$ and $\seq \subseteq s$ are disjoint;
    \item The total cardinality $k' = |\sdif \cup \seq|$ lies in $\Kc$;
    \item The ``distance'' $\ell + k - k'$ exceeds $\dmax$.  Specifically, if $s$ is the true defective set and $\hat{s}$ is some estimate of cardinality $k' \le k$ with $s \cap \hat{s} = \seq$ and $|\seq| = k' - \ell$, then we have $\ell + k - k'$ false negatives, and $\ell$ false positives, so that $d(s,\hat{s}) = \ell + k - k'$ under the distance function in \eqref{eq:dist}.
\end{enumerate}
If multiple $s$ satisfy \eqref{eq:modified_dec}, then the one with the smallest cardinality $k := |s|$ is chosen, with any remaining ties broken arbitrarily.  If none of the $s$ satisfy \eqref{eq:modified_dec}, an error is declared.

Under this decoder, an error occurs if the true defective set $s$ fails the threshold test \eqref{eq:modified_dec}, or if some $s'$ with $|s'| \le |s|$ and $d(s,s') > \dmax$ passes it.  By the union bound, the first of these occurs with probability at most
\begin{equation}
    \pe^{(1)}(s,\dmax) \le \sum_{\substack{(k',\ell) \,:\, k' \in \Kc, \ell \le k' \le k, \\ \ell+k-k' > \dmax}} {k \choose k'} {k' \choose \ell} \PP\big[ \imath_{k'}^n(\Xv_{\sdif};\Yv|\Xv_{\seq}) \ge \gamma_{k',\ell} \big], \label{eq:pe1}
\end{equation}
where $(\sdif,\seq)$ is an arbitrary pair with $|\sdif| = \ell$ and $|\sdif \cup \seq| = k'$.  Here the combinatorial terms arise by choosing $k'$ elements of $s$ to form $\sdif \cup \seq$, and then choosing $\ell$ of those elements to form $\sdif$.

As for the probability of some incorrect $s'$ passing the threshold test, we have the following.  Let $\seqbar = s' \cap s$ and $\sdifbar = s' \setminus s$.  Since only sets with $|s'| \le |s|$ can cause errors, $k' := |s'| = |\seq \cup \sdifbar|$ is upper bounded by $k$, and since only sets with $d(s,s') > \dmax$ can cause errors, we can also assume that this holds.  Defining $\ell = |\sdif|$, we can upper bound the probability of $s'$ passing the test \eqref{eq:modified_dec} for all $(\sdif,\seq)$ by the probability of passing it for the specific pair $(\sdifbar,\seqbar)$.  By doing so, and summing over all possible $s'$, we find that the second error event is upper bounded as follows for any given $s$:
\begin{equation}
    \pe^{(2)}(s,\dmax) \le \sum_{\substack{(k',\ell) \,:\, k' \in \Kc, \ell \le k' \le k, \\ \ell+k-k' > \dmax}} {p-k \choose \ell} {k \choose k'-\ell} \PP\big[ \imath_{k'}^n(\Xv_{\sdifbar};\Yv|\Xv_{\seqbar}) \ge \gamma_{k',\ell} \big], \label{eq:pe2}
\end{equation}
where the combinatorial terms corresponding to choosing $\ell$ elements of $\{1,\dotsc,p\} \setminus s$ to form $\sdifbar$, and choosing $k' - \ell$ elements of $s$ to form $\seqbar$.

Combining the above, the overall upper bound on the error probability given $s$ is
\begin{equation}    
    \pe(s) \le \pe^{(1)}(s,\dmax) + \pe^{(2)}(s,\dmax). \label{eq:pe_total}
\end{equation}
Upon substituting the upper bounds in \eqref{eq:pe1} and \eqref{eq:pe2}, we obtain an expression that is nearly the same as that when $k$ is known \cite{Sca15b}, except that we sum over a number of different $k'$, rather than only $k' = k$.  We proceed by arguing that this does not affect the final bound, as long as $\dmax = \Theta(k^{\gamma})$ for some $\gamma \in (0,1]$, and $\Delta = o(\dmax)$ (recall that $\Delta$ is the highest possible difference between two $k$ values).

The main additional difficulty here is that the information density $\imath_{k'}(x_{\sdif};y|x_{\seq}) = \log\frac{P_{Y|X_{\sdif},X_{\seq}}(y|x_{\sdif},x_{\seq})}{P_{Y|X_{\seq}}(y|x_{\seq})}$ is defined with respect to $(\seq,\sdif)$ of total cardinality $k'$, whereas the output variables $\Yv$ are distributed according to the true model in which there are $k$ defectives.  The following lemma allows us to perform a {\em change of measure} to circumvent this issue. 

\begin{lem} \label{lem:pr_ratio}
    Fix a defective set $s$ of cardinality $k$,  let $(\sdif,\seq)$ be disjoint subsets of $s$ with total cardinality $k' \le k$, and let $P^{(k)}_{Y|X_{\sdif},X_{\seq}}$ be the conditional probability of $Y$ given the partial test vector $(X_{\sdif},X_{\seq})$, in the case of a test vector with i.i.d.~$\mathrm{Bernoulli}\big(\frac{\nu}{\kmax}\big)$ entries, where $\kmax = k(1+o(1))$.  Similarly, let $P^{(k')}_{Y|X_{\sdif},X_{\seq}}$ denote the conditional transition law when $s' = \sdif \cup \seq$ is the true defective set.  Then, if $|k - k'| \le \Delta = o(k)$, we have
    \begin{equation}
        \max_{x_{\sdif},x_{\seq},y} \frac{ P^{(k)}_{Y|X_{\sdif},X_{\seq}}(y|x_{\sdif},x_{\seq}) }{ P^{(k')}_{Y|X_{\sdif},X_{\seq}}(y|x_{\sdif},x_{\seq}) } \le 1 + O\bigg( \frac{\Delta}{k} \bigg). \label{eq:pr_ratio}
    \end{equation}
    Consequently, the corresponding $n$-letter product distributions $P^{(k)}_{\Yv|\Xv_{\sdif},\Xv_{\seq}}$ and $P^{(k')}_{\Yv|\Xv_{\sdif},\Xv_{\seq}}$ for conditionally independent observations satisfy the following:
    \begin{equation}
        \max_{\xv_{\sdif},\xv_{\seq},\yv} \frac{ P^{(k)}_{\Yv|\Xv_{\sdif},\Xv_{\seq}}(\yv|\xv_{\sdif},\xv_{\seq}) }{ P^{(k')}_{\Yv|\Xv_{\sdif},\Xv_{\seq}}(\yv|\xv_{\sdif},\xv_{\seq}) } \le e^{O(\frac{n\Delta}{k})} \label{eq:pr_ratio_n}
    \end{equation}
\end{lem}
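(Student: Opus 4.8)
The plan is to reduce the single-letter ratio bound \eqref{eq:pr_ratio} to a two-case analysis driven by the partial OR revealed by the conditioned coordinates, and then lift it to \eqref{eq:pr_ratio_n} using the conditional-independence structure of i.i.d.\ testing. First I would fix a single test and write $a := \big(\bigvee_{j \in \sdif} x_j\big) \vee \big(\bigvee_{j \in \seq} x_j\big) \in \{0,1\}$ for the OR contributed by the conditioned coordinates. Under both models the noiseless outcome decomposes as $U = a \vee V$, where $V := \bigvee_{j \in s \setminus (\sdif \cup \seq)} X_j$ is the OR over the $k-k'$ ``hidden'' defective items, each included independently with probability $q := \frac{\nu}{\kmax}$. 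Under $P^{(k')}$ the true set is $\sdif \cup \seq$, so these hidden items are absent and $V = 0$ deterministically; under $P^{(k)}$ we instead have $\PP[V = 0] = (1-q)^{k-k'}$. The key observation is that whenever $a = 1$ we have $U = 1$ in \emph{both} models, so $Y \sim \Bernoulli(1-\rho)$ identically and the ratio is exactly $1$; hence only the case $a = 0$ requires work.

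For $a = 0$, I would compute the two laws explicitly. Writing $p_0 := (1-q)^{k-k'}$, under $P^{(k)}$ the law is $P^{(k)}_{Y=1 \mid a=0} = \rho + (1-p_0)(1-2\rho)$ and $P^{(k)}_{Y=0 \mid a=0} = (1-\rho) - (1-p_0)(1-2\rho)$, whereas under $P^{(k')}$ the outcome is simply $\Bernoulli(\rho)$. Taking ratios, the $y = 0$ case yields a value strictly below $1$ (so it is harmless), while the binding $y = 1$ case yields $1 + \frac{(1-p_0)(1-2\rho)}{\rho}$. I would then finish with the elementary Bernoulli-type bound $1 - p_0 = 1 - (1-q)^{k-k'} \le (k-k')\,q \le \Delta \cdot \frac{\nu}{\kmax}$, which is $O\big(\frac{\Delta}{k}\big)$ because $\kmax = k(1+o(1))$ and $\rho,\nu$ are fixed constants with $1 - 2\rho < 1$. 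This establishes \eqref{eq:pr_ratio}.

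For the $n$-letter statement I would argue that, conditioned on $(\Xv_{\sdif}, \Xv_{\seq})$, the outcomes $Y^{(1)}, \dotsc, Y^{(n)}$ are independent across tests: each $Y^{(i)}$ depends only on the corresponding row of the conditioned coordinates, the independent hidden-defective entries, and the i.i.d.\ noise. Consequently both $n$-letter conditional laws factor into products of the single-letter conditionals, and applying \eqref{eq:pr_ratio} termwise together with $1 + x \le e^{x}$ gives $\prod_{i=1}^n \frac{P^{(k)}(y^{(i)} \mid \cdot)}{P^{(k')}(y^{(i)} \mid \cdot)} \le \big(1 + O(\tfrac{\Delta}{k})\big)^n \le e^{O(n\Delta/k)}$, which is \eqref{eq:pr_ratio_n}.

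The computation itself is routine; the only step worth flagging is the factorization in the last paragraph, which relies crucially on the non-adaptive i.i.d.\ design so that conditioning on the columns $(\Xv_{\sdif}, \Xv_{\seq})$ leaves all remaining randomness (hidden defectives and noise) independent across tests. Everything else is a direct single-letter calculation, with the discrepancy between the two models localized entirely to the event $\{a = 0,\, Y = 1\}$.
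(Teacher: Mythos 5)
Your proof is correct and follows essentially the same route as the paper's: reduce to the case where the conditioned coordinates are all zero (ratio exactly $1$ otherwise), compute both single-letter laws explicitly to get a ratio of $1 + O(\Delta/k)$ via the bound $1-(1-q)^{k-k'} \le (k-k')q$, and lift to the $n$-letter statement using the product structure and $1+x \le e^x$. Your treatment is marginally more explicit than the paper's in two harmless respects: you note that the $y=0$ case gives a ratio below $1$, and you spell out the conditional-independence factorization underlying \eqref{eq:pr_ratio_n}.
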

\begin{proof}
    First observe that if $x_{\sdif}$ or $x_{\seq}$ contain an entry equal to one, then the ratio in \eqref{eq:pr_ratio} equals one, as $Y = 1$ with probability $1 - \rho$ in either case.  Hence, it suffices to prove the claim for $x_{\sdif}$ and $x_{\seq}$ having all entries equal to zero.  In the denominator, we have
    \begin{equation}
        P^{(k')}_{Y|X_{\sdif},X_{\seq}}(1|x_{\sdif},x_{\seq}) = \rho,  \label{eq:pk'_eval}   
    \end{equation}
    since there $(\sdif,\seq)$ corresponds to the entire defective set.  On the other hand, in the numerator, there are $k - k'$ additional defective items, and the probability of one or more of them being defective is $\epsilon := 1 - \big( 1 - \frac{\nu}{k} )^{k - k'} = O\big( \frac{\Delta}{\kmax} \big)$, where we applied the assumptions $|k - k'| \le \Delta = o(k)$ and $\kmax = k(1+o(1))$, along with some asymptotic simplifications.  Therefore, we have
    \begin{align}
        P^{(k)}_{Y|X_{\sdif},X_{\seq}}(1|x_{\sdif},x_{\seq}) 
            &= (1-\epsilon)\rho + \epsilon(1-\rho) \\    
            &= \rho + \epsilon(1-2\rho). \label{eq:pk_eval}
    \end{align}
    The ratio of \eqref{eq:pk_eval} and \eqref{eq:pk'_eval} evaluates to $1 + O(\epsilon)$, and similarly for the conditional probabilities of $Y=0$ obtained by taking one minus the right-hand sides.  Since $\epsilon = O\big( \frac{\Delta}{k} \big)$, this proves \eqref{eq:pr_ratio}.
    
    We obtain \eqref{eq:pr_ratio_n} by raising the right-hand side of \eqref{eq:pr_ratio} to the power of $n$, and applying $1 + \alpha \le e^{\alpha}$.
\end{proof}

We now show how to use Lemma \ref{lem:pr_ratio} to bound $\pe^{(1)}(s,\dmax)$ and $\pe^{(2)}(s,\dmax)$.  Starting with the former, we observe that $\Yv$ in \eqref{eq:pe1} is conditionally distributed according to $P^{(k)}_{\Yv|\Xv_{\sdif},\Xv_{\seq}}$, and hence, \eqref{eq:pr_ratio_n} yields
\begin{equation}
    \PP\big[ \imath_{k'}^n(\Xv_{\sdif};\Yv|\Xv_{\seq}) \ge \gamma_{k',\ell} \big] \le e^{O(\frac{n\Delta}{k})} \cdot \PP\big[ \imath_{k'}^n(\Xv_{\sdif};\widetilde{\Yv}|\Xv_{\seq}) \ge \gamma_{k',\ell} \big], \label{eq:apply_ratio_1}
\end{equation}
where $\widetilde{\Yv}$ is conditionally distributed according to $P^{(k')}_{\Yv|\Xv_{\sdif},\Xv_{\seq}}$.

For $\pe^{(2)}(s,\dmax)$, we first note that a similar bound to \eqref{eq:pr_ratio_n} holds when we condition on $\Xv_{\seq}$ alone; this is seen by simply moving the denominator to the right-hand side and averaging over $\Xv_{\sdif}$ on both sides.  Since $\Yv$ in \eqref{eq:pe2} is conditionally distributed according to $P^{(k)}_{\Yv|\Xv_{\seqbar}}$, we obtain from \eqref{eq:pr_ratio_n} that
\begin{equation}
\PP\big[ \imath_{k'}^n(\Xv_{\sdifbar};\Yv|\Xv_{\seqbar}) \ge \gamma_{k',\ell} \big] \le e^{O(\frac{n\Delta}{k})} \cdot \PP\big[ \imath_{k'}^n(\Xv_{\sdifbar};\widetilde{\Yv}|\Xv_{\seqbar}) \ge \gamma_{k',\ell} \big], \label{eq:apply_ratio_2}
\end{equation}
where $\widetilde{\Yv}$ is conditionally distributed according to $P^{(k')}_{\Yv|\Xv_{\seqbar}}$.

Next, observe that if the number of tests satisfies $n = O(k \log p)$, then we can simplify the term $e^{O(\frac{n\Delta}{k})}$ to $e^{O(\Delta \log p)}$.  By doing so, and substituting \eqref{eq:apply_ratio_1} and \eqref{eq:apply_ratio_2} into \eqref{eq:pe1}--\eqref{eq:pe_total}, we obtain
\begin{align}
    \pe(s) &\le e^{O(\Delta \log p)} \sum_{\substack{(k',\ell) \,:\, k' \in \Kc, \ell \le k' \le k, \\ \ell+k-k' > \dmax}} {k \choose k'} {k' \choose \ell}  \PP\big[ \imath_{k'}^n(\Xv_{\sdif};\widetilde{\Yv}|\Xv_{\seq}) \ge \gamma_{k',\ell} \big] \nonumber \\
        & + e^{O(\Delta \log p)}\sum_{\substack{(k',\ell) \,:\, k' \in \Kc, \ell \le k' \le k, \\ \ell+k-k' > \dmax}} {p-k \choose \ell} {k \choose k'-\ell} \PP\big[ \imath_{k'}^n(\Xv_{\sdifbar};\widetilde{\Yv}|\Xv_{\seqbar}) \ge \gamma_{k',\ell} \big].
\end{align}
This bound is now of a similar form to that analyzed in \cite{Sca15b}, in the sense that the joint distributions of the tests and outcomes match those that define the information density.  The only differences are the presence of additional $k'$ values beyond only $k = k'$, and the presence of the $e^{O(\Delta \log p)}$ terms.  We conclude by explaining how these differences do not impact the final result as long as $\Delta = o(\dmax)$ with $\dmax = \Theta(k^{\gamma})$ for some $\gamma \in (0,1]$:
\begin{itemize}
    \item The term ${p - k \choose \ell}$ satisfies $\log {p -k \choose \ell} = \big(\ell\log\frac{p}{\ell}\big)(1+o(1))$, and the assumption $|k - k'| \le \Delta = o(\dmax) = o(k)$ implies that the term ${k' \choose \ell}$ satisfies $\log {k' \choose \ell} = \big(\ell\log\frac{k}{\ell}\big)(1+o(1))$.  On the other hand, the logarithm of  $e^{O(\Delta \log p)}$ is $O(\Delta \log p)$, so it is dominated by the other combinatorial terms due to the fact that $\Delta = o(\dmax)$ and $\ell = \Omega(\dmax)$.  Similarly, the term ${k \choose k'} = {k \choose k-k'}$ satisfies $\log {k \choose k'} = O(\Delta \log k)$, and is dominated by ${k' \choose \ell}$.
    \item The term ${k \choose k' - \ell}$ simplifies to ${k \choose k - k' + \ell} = {k \choose \ell(1+o(1))}$ (by the assumption $\Delta = o(\dmax)$), and hence, the asymptotic behavior for any $k'$ is the same as ${k \choose k - \ell}$, the term corresponding to $k = k'$.  Similarly, the asymptotics of the tail probabilities of the information densities are unaffected by switching from $k$ to $k' = k(1+o(1))$.
    \item In \cite{Sca15b}, the number of $\ell$ being summed over is upper bounded by $k$, whereas here we can upper bound the number of $(k',\ell)$ being summed over by $k\Delta$.  Since $\Delta = o(k)$, this simplifies to $k^{1+o(1)}$.  Since it is the logarithm of this term that appears in the final expression, this difference only amounts to a multiplication by $1+o(1)$.
\end{itemize}

\subsection{NCOMP with Unknown Number of Defectives} \label{app:ncomp}

Chan {\em et al.} \cite{Cha11} showed that Noisy Combinatorial Orthogonal Matching Pursuit (NCOMP), used in conjunction with i.i.d.~Bernoulli test matrices, ensures exact recovery of a defective set $S$ of cardinality $k$ with high probability under the scaling $n = O( k\log p )$, which in turn behaves as $O\big( k\log\frac{p}{k} \big )$ when $k = O(p^{\theta})$ for some $\theta < 1$.  However, the random test design and the decoding rule in \cite{Cha11} assume knowledge of $k$, meaning the result cannot immediately be used for our purposes in Step 2 of Algorithm \ref{alg:steps}.  In this section, we modify the algorithm and analysis of \cite{Cha11} to handle the case that $k$ is only known up to a constant factor.

Suppose that $k \in [c_0\kmax,\kmax]$ for some $\kmax = \Theta(p^{\theta})$, where $c_0 \in (0,1)$ and $\theta \in (0,1)$ do not depend on $p$.  We adopt a Bernoulli design in which each item is independently placed in each test with probability $\frac{\nu}{\kmax}$ for fixed $\nu > 0$.  It follows that for a given test vector $X = (X_1,\dotsc,X_p)$, we have
\begin{equation}
    \PP\bigg[\bigvee_{j \in S} X_j = 1 \bigg] = 1 - \bigg(1 - \frac{\nu}{\kmax}\bigg)^{k} = (1 - e^{-c\nu}) (1+o(1))
\end{equation}
for some $c \in [c_0,1]$, and hence, the corresponding observation $Y$ satisfies
\begin{equation}
    \PP[Y = 1] = \Big( (1-\rho)(1 - e^{-c\nu}) + \rho e^{-c\nu}\Big) (1+o(1)). \label{eq:ncomp_p_uncond}
\end{equation}
In contrast, for any $j \in S$, we have
\begin{equation}
    \PP[Y = 1 | X_j = 1] = 1 - \rho. \label{eq:ncomp_p_cond}
\end{equation}
The idea of the NCOMP algorithm is the following: For each item $j$, consider the set of tests in which the item is included, and define the total number as $N'_j$.  If $j$ is defective, we should expect a proportion of roughly $1-\rho$ of these tests to be positive according to \eqref{eq:ncomp_p_cond}, whereas if $j$ is non-defective, we should expect the proportion to be roughly $ (1-\rho)(1 - e^{-c\nu}) + \rho e^{-c\nu}$ according to \eqref{eq:ncomp_p_uncond}.  Hence, we set a threshold in between these two values, and declare $j$ to be defective if and only if the proportion of positive tests exceeds that threshold.

We first study the behavior of $N'_j$.  Under the above Bernoulli test design, we have $N'_j \sim \mathrm{Binomial}\big(n,\frac{\nu}{\kmax}\big)$, and hence, standard Binomial concentration \cite[Ch.~4]{Mot10} gives
\begin{align}
    \PP\bigg[ N'_j \le \frac{n\nu}{2\kmax} \bigg] 
        &\le e^{-\Theta(1) \frac{n}{\kmax}} \\
        &\le \frac{1}{p^2}, \label{eq:N'_conc2}
\end{align}
where \eqref{eq:N'_conc2} holds provided that $n = \Omega(k \log p)$ with a suitably-chosen implied constant (recall that $k = \Theta(\kmax)$).

Next, we present the modified NCOMP decoding rule, and study its performance under the assumption that $N'_j = n'_j$ with $n'_j \ge \frac{n\nu}{2\kmax}$, for each $j \in \{1,\dotsc,p\}$.  Observe that the gap between \eqref{eq:ncomp_p_uncond} and \eqref{eq:ncomp_p_cond} behaves as $\Theta(1)$ for any $c \in [c_0,1]$.  Hence, for sufficiently small $\Delta > 0$, we have $\PP[Y=1] \le 1 - \rho - 2\Delta$.  Accordingly, letting $N'_{j,1}$ be the number of the $N'_j$ tests including $j$ that returned positive, we declare $j$ to be defective if and only if $N'_{j,1} \ge (1 - \rho - \Delta)N'_j$.  We then have the following:
\begin{itemize}
    \item If $j$ is defective, then the probability of incorrectly declaring it to be non-defective given $N'_j = n'_j$ satisfies
    \begin{equation}
        \PP\big[ N'_{j,1} < (1 - \rho - \Delta)n'_j \big] \le e^{-\Theta(1) n'_j} \le e^{-\Theta(1) \frac{n\nu}{2\kmax}},
    \end{equation}
    where the first inequality is standard Binomial concentration, and the second holds for $n'_j \ge \frac{n\nu}{2\kmax}$.
    \item Similarly, if $j$ is non-defective, the probability of incorrectly declaring it to be defective given $N'_j = n'_j$ satisfies
    \begin{equation}
        \PP\big[ N'_{j,1} \ge (1 - \rho - \Delta)n'_j \big] \le e^{-\Theta(1) n'_j} \le e^{-\Theta(1) \frac{n\nu}{2\kmax}}.
    \end{equation}
\end{itemize}
Combining these bounds with \eqref{eq:N'_conc2} and a union bound over the $p$ items, the overall error probability $\pe = \PP[\Shat\ne S]$ of the modified NCOMP algorithm is upper bounded by
\begin{equation}
    \pe \le \frac{1}{p} + p e^{-\Theta(1) \frac{n\nu}{2\kmax}}.
\end{equation}
Since $\kmax = \Theta(k)$, this vanishes when $n = \Omega(k \log p)$ with a suitably-chosen implied constant, thus establishing the desired result.

{\bf Z-channel noise.}  Under the Z-channel noise model introduced in Section \ref{sec:asymm}, the preceding analysis is essentially unchanged.  It only relied on there being a constant gap between the probabilities $\PP[Y = 1]$ and $\PP[Y = 1 \,|\, X_j = 1]$, and this is still the case here: Equations \eqref{eq:ncomp_p_uncond} and \eqref{eq:ncomp_p_cond} remain true when $(1-\rho)(1 - e^{-c\nu}) + \rho e^{-c\nu}$ is replaced by $(1 - e^{-c\nu}) + \rho e^{-c\nu}$ in the former.

\section*{Acknowledgment}

The author thanks Volkan Cevher, Sidharth Jaggi, Oliver Johnson, and Matthew Aldridge for helpful discussions, and Leonardo Baldassini for sharing his PhD thesis \cite{BalThesis}.  This work was supported by an NUS startup grant.

 \bibliographystyle{IEEEtran}
 \bibliography{JS_References}
 
\end{document}